\theoremstyle{definition}
\newtheorem{condition}{Condition}[section]
\theoremstyle{remark}
\newtheorem{remark}{Remark}[section]
\theoremstyle{plain}
\newtheorem{proposition}{Proposition}[section]
\newtheorem{lemma}{Lemma}[section]
\let\oldtabular\tabular
\renewcommand{\tabular}{\footnotesize\oldtabular}
\title{\bf Multivariate Geometric Anisotropic \\Cox Processes}
\author[1]{J. S. Martin}
\author[2]{D. J. Murrell}
\author[3]{S. C. Olhede}
\affil[1]{{\small Department of Mathematics, Imperial College London, London SW11 2AZ, U.K.}}
\affil[2]{{\small Centre for Biodiversity and Environment Research, Department of Genetics, Evolution and Environment, University College London, London WC1E 6BT, U.K.}}
\affil[3]{{\small Department of Statistical Science, University College London, London WC1E 6BT, U.K.}}
\date{\today}
\begin{document}
\def\blind{0} 
\def\blindtest{1}

\def\spacingset#1{\renewcommand{\baselinestretch}%
	{#1}\small\normalsize} \spacingset{1}



\ifx\blind\blindtest
\bigskip
\bigskip
\bigskip
\begin{center}
	{\LARGE\bf Multivariate Geometric Anisotropic Cox Processes}
\end{center}
\medskip
\else
\maketitle
\fi

\bigskip
\begin{abstract}
	This paper introduces a new modelling framework for multivariate anisotropic Cox processes. Building on recent innovations in multivariate spatial statistics, we propose a new family of multivariate anisotropic random fields and construct a family of anisotropic point processes from it. We give conditions that make the models valid, and we provide additional understanding of valid point process dependence. We also propose a likelihood-based inference mechanism for this type of process. Finally we illustrate the utility of the proposed modelling framework by analysing spatial ecological observations of plants and trees in the Barro Colorado Island study.
\end{abstract}

\noindent%
{\it Keywords:}  Multivariate point processes; likelihood estimation; forest ecology
\vfill

\newpage
\spacingset{1.45} 
\section{Introduction}
\label{sec:intro}

	In this paper, we introduce a new class of multivariate and heterogeneous point process models. In doing so, we address two challenging problems in point process analysis: we propose valid and nontrivial models for multi-type point processes, an open problem in the literature, and we produce multivariate spatial models that can flexibly accommodate anisotropy in both the marginal and joint dependence structures. 

	We choose to build our models using log-Gaussian Cox processes \citep{Diggle83a,Moller98} as a foundation. Thus, the observed point pattern is modelled in terms of a random intensity, generated by a random field. Recent interest in random field modelling has greatly enhanced our ability to specify flexible models for multivariate patterns. We shall build on recent progress made in this area by, for example, \citet{Gneiting10}, \citet{Apanasovich12} and \citet{Genton15}, by allowing for anisotropy in the second-order dependence structure of the latent random field model.

	Datasets that require anisotropic models have been common in the point process literature over the last 20 years, for example the locations of chapels in Welsh valleys \citep{Mugglestone96a,Rajala16,Rajala18a}, the epicentral locations of earthquakes in California over a 20 year period \citep{Veen06} and clustered locations of shrubs in dryland ecosystems \citep{Haase01}. The Welsh chapels and Californian earthquakes both form elliptical clusters, indicating an anisotropic second-order interaction between points in the same pattern. Meanwhile, the dryland shrub data display a directional preference in the interaction of points of different type: \citet{Haase01} found one species to grow more often than would be expected to the east of a second species. In the point process literature, it is common to accommodate heterogeneities in the observed point pattern by using a spatially homogeneous random field model to specify an intensity process conditional upon some known covariates \citep[see, e.g.~][]{Waagepetersen08,Waagepetersen09,Diggle13b}. This approach is limited in its applicability, however, when faced with heterogeneous point pattern data with no covariate measurements, or indeed when the source of heterogeneity is unknown.

	Our chosen approach to accommodating anisotropy is based upon a particular form of anisotropy known as geometric anisotropy \citep{Goff88}: whereas the spatial covariance functions that drive isotropic processes have circular contours of equivariance, those that drive geometric anisotropic processes have elliptical contours of equivariance. This approach was also considered in the univariate case by \citet{Moller14}. A great advantage of this approach is that it can be used in conjunction with well-known isotropic covariance functions; our models will use Mat\'ern-based covariance structures, which will allow the user to directly specify both the range of dependence in, and the smoothness of, the resulting random field. We will also discuss and address identifiability concerns for this class of parametric models.

	Once the random field has been specified, the point process is conditionally generated as a Poisson process with intensity determined by the random field. This has the advantage of automatically generating a valid set of anisotropic point processes, where the marginal and cross-pair correlation functions have an analytic form, which we provide. 
	We explore the restrictions that are naturally placed on all cross-pair correlation functions, where we utilise recent results for isotropic multivariate random fields due to \citet{Apanasovich12} and \citet{Gneiting10}. 
	By representing our multivariate process in both the spatial and spectral domains, we will also demonstrate that allowing for distinct geometric anisotropies in each marginal process places further restrictions on valid forms of the cross-dependence structures. This is an important result that yields unique insights into the possible variation of joint co-dependence in multivariate geometric anisotropic random fields, and by extension Cox processes. 


	Once we have understood the constraints on possible model forms, we develop new inference methods. We detail a two-stage estimation procedure in which we first estimate the anisotropy parameters, and then use these estimates to transform the data to be isotropic; this `isotropised' point pattern is then used to estimate the covariance parameters for the underlying random field model. For this second stage, \citet{Moller14} advocated the use of minimum contrast, a method of moments approach to estimation for point process models. We appeal to the likelihood principle, and develop a maximum likelihood-based approach to inference that builds on the work of \citet{Tanaka08}. Straightforward maximum likelihood estimation of the model parameters is infeasible, due to the intractability of the LGCP likelihood, however \citet{Tanaka08} showed that the intractability of the point process likelihood can be circumvented by considering the so-called Fry process \citep{Fry79}. This is a secondary point pattern formed by the difference vectors of all point pairs in the original point pattern, and it can be treated as an inhomogeneous Poisson point process, with an associated tractable likelihood. \citet{Tanaka08} showed that the Fry process likelihood can be used to perform inference for univariate, isotropic point process models. In a novel extension of this work, we use the Fry process likelihood to perform inference for anisotropic, multivariate point processes.

	Finally, we apply our newly-developed methodology to real data from a tropical rainforest stand on Barro Colorado Island, Panama \citep{Condit98,Hubbell99,Hubbell10}. Recent work by \citet{Waagepetersen16} and \citet{Rajala18b} has highlighted the importance of developing realistic multivariate point process models to aid the understanding of complex species interactions within this rainforest. The need to develop anisotropic methodology in particular is characterised in Figures \ref{subfig:BCIdata_int1} and \ref{subfig:BCIdata_int2}, which show the estimated intensity of \textit{Guatteria dumetorum} and \textit{Miconia hondurensis}. Their strongly anisotropic features are clear, and we also show two simulated fields from the presented multivariate model class, exhibiting similar features. 

	\begin{figure}[t!]
	\hspace{0.01\textwidth}
		\begin{subfigure}{0.4\textwidth}
			\includegraphics[width=\textwidth]{./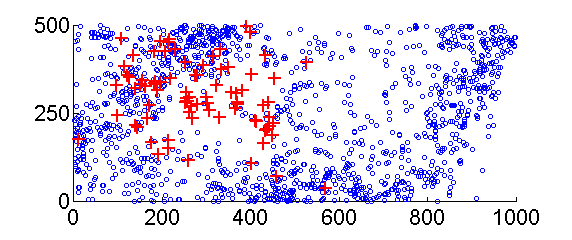}
			\vspace{-10pt}
			\vspace{-10pt}
			\caption{}
			\label{subfig:BCIdata_PPs}
		\end{subfigure}	
	\hspace{0.1\textwidth}
		\begin{subfigure}{0.4\textwidth}
			\includegraphics[width=\textwidth]{./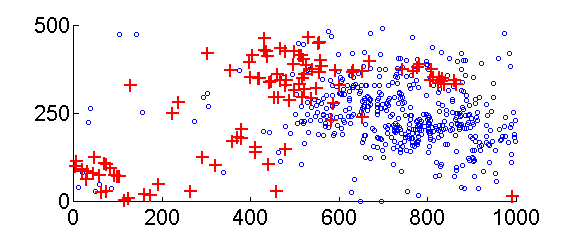}
			\vspace{-10pt}
			\vspace{-10pt}
			\caption{}
			\label{subfig:synthdata_PPs}
		\end{subfigure}
	\begin{center}
	\vspace{-10pt}
		\begin{subfigure}{0.475\textwidth}
			\includegraphics[width=\textwidth]{./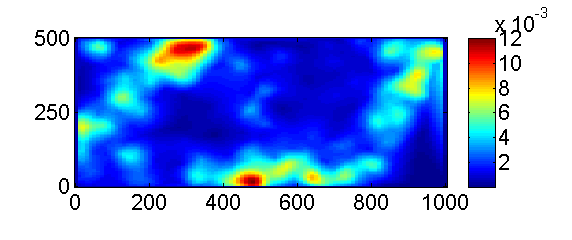}
			\vspace{-10pt}
			\vspace{-10pt}
			\caption{\hspace{15pt} }
			\label{subfig:BCIdata_int1}
		\end{subfigure}
	\hspace{0.03\textwidth}
		\begin{subfigure}{0.475\textwidth}
			\includegraphics[width=\textwidth]{./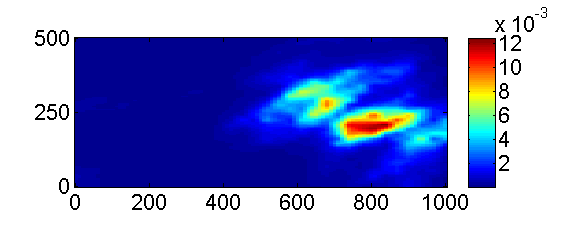}
			\vspace{-10pt}
			\vspace{-10pt}
			\caption{\hspace{15pt} }
			\label{subfig:synthdata_int1}
		\end{subfigure}
	\vspace{-20pt}
		\begin{subfigure}{0.475\textwidth}
			\includegraphics[width=\textwidth]{./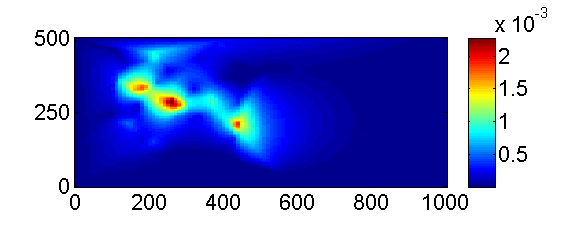}
			\vspace{-10pt}
			\vspace{-10pt}
			\caption{\hspace{15pt} }
			\label{subfig:BCIdata_int2}`
		\end{subfigure}
	\hspace{0.03\textwidth}
		\begin{subfigure}{0.475\textwidth}
			\includegraphics[width=\textwidth]{./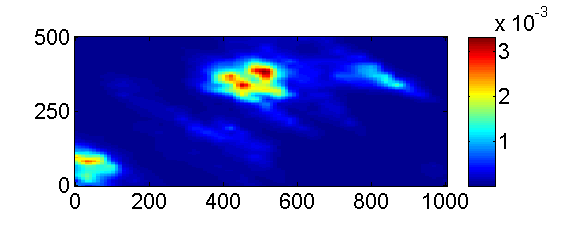}
			\vspace{-10pt}
			\vspace{-10pt}
			\caption{\hspace{15pt} }
			\label{subfig:synthdata_int2}
		\end{subfigure}
	\end{center}
	\caption{Point pattern data showing two species of tree from the BCI tropical rainforest (a; {\em Guatteria dumetorum}, blue; {\em Miconia hondurensis}, red), along with their estimated intensity fields (c,e), and simulated point pattern data (b) from two independent univariate geometric anisotropic log-Gaussian Cox processes, with their corresponding simulated intensity fields (d,f).}
	\label{fig:BCIvssynthdata}
	\end{figure}

	Thus, to summarize, this paper provides a number of new and important insights for multivariate spatial processes, describing the complex relationships possible when allowing for distinct geometric anisotropies in each univariate component. Our understanding gives sufficient, but not necessary, conditions to yield valid multivariate random field models and, by extension, valid multivariate Cox processes. 

	\section{Background}
	\subsection{Log-Gaussian Cox processes}
	Consider the multivariate point process $X=\{X_p\in\mathbb{R}^d, p=1,\ldots,P\}$, where the index $p$ is used to denote a univariate component of the multivariate process, and suppose that we wish to use such a process to model a multi-type point pattern. We will denote the observed point pattern $X\cap W = \{x_{p,i} \in W, i=1,\ldots,n_p ; p=1,\ldots,P\}$, where $n_p\in\mathbb{N}$ is the total number of points of type $p$ observed in the observation window $W\subset\mathbb{R}^d$. Henceforth, we will also use $x_p$ to denote an arbitrary observed point of type $p$. For many applications of interest, $d=2$, however much of the multivariate framework established here can be applied to point processes defined on a space of any dimension. 

	We define $X$ to be a multivariate log-Gaussian Cox process \citep[LGCP;][]{Moller98}: each univariate sub-process $X_p$ is an inhomogeneous Poisson process with intensity specified by
	\begin{equation}
	\label{eq_LGCPintensitydef}
	\Lambda_p(x) = \exp\{S_p(x)\}, \qquad x\in\mathbb{R}^d,
	\end{equation}
	where $S(x) = \{S_p(x), p=1,\ldots,P\}$ is a multivariate Gaussian random field (GRF). We will assume $S_p$, and therefore $X_p$, to be stationary for all $p=1,\ldots,P$, and we denote the constant mean of $S_p(x)$ by $\mu_p$. The intensity process $\Lambda_p(x)$ will therefore also have a constant mean, which we denote $\lambda_p$, and which will take the following form:
	\begin{equation}
	\label{eq:lamp_mup}
	\lambda_p = \mathbb{E}\left\{\Lambda_p\right\} = \exp\left\{\mu_p + \sigma_{pp}/2\right\},
	\end{equation}
	where $\sigma_{pp}$ denotes the variance of $S_p(x)$.

	Key to the definition of a multivariate LGCP is the conditional independence of its components: given its intensity process $\Lambda_p(x)$, the univariate LGCP $X_p$ is independent of $\{X_q,q=1,\ldots,P, q\ne p\}$. As a result of this property, the second-order behaviour of the point process $X$ may be entirely, and conveniently, described through the covariance structure of the multivariate GRF $S(x)$. We do so by specifying the matrix of covariance functions $\{C_{pq}(h)\}_{p,q=1}^P$, with
	\begin{equation*}
		C_{pq}(h) = \textrm{cov}\left\{S_{p}(x),S_{q}(x-h)\right\},\qquad x,h\in\mathbb{R}^d.
	\end{equation*}

	The second-order behaviour of the multivariate point process $X$ can be directly measured through the level of clustering or separation present in the resulting point pattern. The cross-pair correlation function $g_{pq}(r)$ is defined as the expected number of points from process $q$ that lie at a distance $r$ from the typical point in process $p$, and is the standard tool for measuring aggregation and segregation, both within and between processes. For a log-Gaussian Cox process, $g_{pq}(r)$ can be straightforwardly expressed in terms of the covariance structure for the underlying multivariate GRF:
	\begin{equation*}
	g_{pq}(h) = \exp\{C_{pq}(h)\}, \qquad h\in\mathbb{R}^d.
	\end{equation*}
	From this relationship, it is clear to see that $g_{pq}(h)=1$ is equivalent to $C_{pq}(h)=0$, which indicates independence between processes $p$ and $q$ at the scale $h\in\mathbb{R}^d$. Thus, for a bivariate Poisson process $\{X_p,X_q\}$, i.e.~under an assumption of complete spatial randomness, we would expect $g_{pq}(h)=1$, whereas significant departures from this would indicate aggregation ($g_{pq}(h)>1$) or segregation ($g_{pq}(h)<1$) of points from processes $p$ and $q$ at separation $h\in\mathbb{R}^d$.

	The dependence structure for the multivariate GRF $S(x)$ can equivalently be described in the frequency (spectral) domain. The (cross-)spectral density function $f_{pq}(\omega)$ forms a Fourier transform pair with the (cross-)covariance function:
	\begin{equation*}
	f_{pq}(\omega) = \frac{1}{(2\pi)^d}\int_{\mathbb{R}^d} \exp(-i\omega^T x) C_{pq}(x)dx, \qquad \omega\in\mathbb{R}^d.\\
	\end{equation*}
	By considering the spectral-domain behaviour of our multivariate GRF $S(x)$, we will demonstrate the difficulties inherent in multivariate modelling of geometric anisotropic spatial dependence, and we will consider the complex coherence at frequency $\omega\in\mathbb{R}^d$, $\gamma_{pq}(\omega)$:
	\begin{equation}
	\label{eq:compcoherence}
	\gamma_{pq}(\omega) = \frac{f_{pq}(\omega)}{\left\{f_{pp}(\omega)f_{qq}(\omega)\right\}^{\frac12}}.
	\end{equation}

	\subsection{Geometric Anisotropic LGCPs}
	\label{subsec:GALGCPs}
	We describe here the approach to modelling geometric anisotropy in univariate LGCPs, as introduced by \citet{Moller14}. In brief, the required dependence structure is specified through the application of an isotropic covariance structure to a geometrically manipulated version of the space on which the process lives. Since a LGCP is fully defined by the first and second order characteristics of the underlying Gaussian random field, \citet{Moller14} showed that one can therefore construct a geometric anisotropic LGCP through using standard geometric manipulations to modify the space on which the latent univariate GRF is defined. In Section \ref{sec:mvga}, we will show how this can be flexibly extended to the multivariate case by specifying individual components of a population of $P$ GRFs through $P$ potentially distinct geometric manipulations of $\mathbb{R}^2$.

	Given an isotropic covariance function $C_0(\|h\|)$, $h\in\mathbb{R}^d$, we can define a geometric anisotropic version as \citep{Christakos92}
	\begin{equation}
	\label{eq:geoanisotcov}
	C(h) = C_0\left(\sqrt{h^T\Sigma^{-1} h}\right),
	\hspace{25pt}
	h\in\mathbb{R}^d,
	\end{equation}
	where
	\begin{equation}
	\label{eq:geoanisotSigmadef}
	\Sigma = R_{\theta}
	\left(
	\begin{array}{cc}
	1	&	0	\\
	0	&	\zeta^2
	\end{array}
	\right)
	R^T_{\theta},
	\end{equation}
	for $\theta\in[0,\pi)$ and $\zeta\in(0,1]$, and where $R_{\theta}$ is the rotation matrix. Under this parameterisation, $\Sigma$ is defined such that the ellipse $E=\{h\in\mathbb{R}^2:h^T\Sigma^{-1}h=1\}$ has a semi-major axis of unit length at angle $\theta$, relative to the abscissa axis of the original coordinate system, and a semi-minor axis of length $\zeta$ at angle $\theta+\pi/2$. Accordingly, we can describe the covariance function defined in \eqref{eq:geoanisotcov} as `elliptic', and we have that the LGCP driven by a GRF with elliptic covariance structure will also display elliptic, or geometric anisotropic, second-order behaviour, described by the pair correlation function and spectral density function as:
	\begin{eqnarray*}
	g(h) & = & g_0\left(\sqrt{h^T\Sigma^{-1} h}\right) = \exp\left\{C_0\left(\sqrt{h^T\Sigma^{-1} h}\right)\right\} \\
	f(\omega) & = & \left|\Sigma\right|^{1/2} f_0\left(\sqrt{\omega^T\Sigma\omega}\right),
	\end{eqnarray*}
	for $h,\omega\in\mathbb{R}^d$, where $f_0(\|\omega\|)$ is the isotropic spectral density that forms a Fourier transform pair with $C_0(\|h\|)$, and $g_0(\|h\|)$ is the corresponding isotropic pair correlation function.

	Our specification of geometric anisotropy differs slightly from that of \citet{Moller14}, who include an additional scale parameter in their definition of the deformation matrix $\Sigma$; this is used to scale the axes in the resulting elliptical covariance structure. In practice, however, the majority of parametric covariance functions of interest incorporate a scale parameter that directly controls the correlation length, and so including a separate scale parameter in \eqref{eq:geoanisotSigmadef} creates nonidentifiability issues when performing parameter inference. We avoid this issue by assuming all scale information to be controlled by the parametric form of $C_0(\|h\|)$.

	Since we are considering processes that display anisotropy, it will be useful for their analysis to be able to express their second-order properties in polar coordinates. We therefore define the anisotropic pair correlation function, replacing the vector $h\in\mathbb{R}^d$ with its length $r$ and angle $\phi$:
	\begin{equation}
	\label{eq:anisotpcf}
	g^a(r,\phi) = g([r\cos\phi,r\sin\phi]) = g_0\left(\frac{r}{\zeta}\sqrt{1-(1-\zeta^2)\cos^2(\phi-\theta)}\right).
	\end{equation}


	\section{Defining the Model}
	\label{sec:model}
	\subsection{Accommodating multivariate geometric anisotropy}
	\label{sec:mvga}
	For a population of $P$ LGCPs, we specify the multivariate dependence through the covariance structure of the $P$-dimensional GRF that drives the $P$ conditionally independent intensity processes. We extend the definition of geometric anisotropy in \eqref{eq:geoanisotcov} and we define the following family of geometric anisotropic auto- and cross-covariance functions:
	\begin{equation*}
	C_{pq}(h) = C_{0,pq}\left(\sqrt{h^T\Sigma_{pq}^{-1} h}\right),
	\hspace{25pt}
	p,q = 1,\ldots,P,
	\;\;
	h\in\mathbb{R}^d,
	\end{equation*}
	for some corresponding family of isotropic covariance functions $\{C_{0,pq}(\|h\|);p,q=1,\ldots,P\}$, and for a collection of deformation matrices $\{\Sigma_{pq}; p,q=1,\ldots,P\}$, where $\Sigma_{pq}$ is defined in terms of the parameter pair $(\theta_{pq},\zeta_{pq})$ according to \eqref{eq:geoanisotSigmadef}.

	This framework will allow for the possibility of distinct geometric anisotropies in each of the marginal processes. Such processes can be used to model, for example, bivariate point patterns in which each component displays elliptical clustering at different orientations, or with differing degrees of ellipticity. Care must be taken in specifying the parameters for the cross-covariance functions $C_{pq}$, however, in order to ensure a valid multivariate model. In the spatial domain, we require the matrix of covariance functions $(C_{pq}(h))_{p,q=1}^P$ to be nonnegative definite for all $h\in\mathbb{R}^d$; the equivalent requirement in the spectral domain is that the matrix of spectral densities $(f_{pq}(\omega))_{p,q=1}^P$ is nonnegative definite for all $\omega\in\mathbb{R}^d$. If we consider the bivariate dependence structure for two processes $X_p$ and $X_q$, $p,q=1,\ldots,P$, then we can see that the restriction in the spectral domain is equivalent to requiring the magnitude squared coherence $|\gamma(\omega)|^2$ to be bounded above by 1, where the complex coherence is defined as in \eqref{eq:compcoherence}. This restriction can alternatively, and unsurprisingly, be written at every frequency as 
	\begin{equation}
	\label{eq:crossspectrumUB}
	0\le |f_{pq}(\omega)| \le \left\{f_{pp}(\omega)f_{qq}(\omega)\right\}^{\frac12},\qquad\omega\in\mathbb{R}^d,
	\end{equation}
	and this gives an upper bound on the magnitude of the cross-spectrum. This upper bound is displayed in Figure \ref{fig2:MVGAspectra} for a bivariate process with distinct marginal geometric anisotropies. By considering the behaviour of \eqref{eq:crossspectrumUB} over the two-dimensional Fourier domain, we can now make some general comments about the level of dependence between components in a bivariate geometric anisotropic LGCP; this discussion also applies to pairwise dependences in multivariate LGCPs of higher dimension. For the remainder of this subsection, the only assumption that we make is that each autospectrum and cross-spectrum in the bivariate process is decreasing for increasing frequencies $\omega$. In particular, the following discussion is valid for any family of spectral densities that satisfies this assumption. 

	The inequality in \eqref{eq:crossspectrumUB} implies that, for any two processes, between-process dependence can only be non-negligible at those frequencies that contribute significantly to the marginal dependence in {\em both} processes. 
	For two processes with distinct marginal geometric anisotropies, this restriction impacts the high-frequency behaviour of the bivariate process more than the low-frequency behaviour This can be seen by considering the spectra displayed in Figure \ref{fig2:MVGAspectra}: when constructing the upper bound for the cross-spectrum according to \eqref{eq:crossspectrumUB}, the high-frequency contributions of each of the autospectra are killed by the negligible power at the same frequency in the other autospectrum; the contrasting behaviour of the marginal processes at high frequencies kills any high-frequency dependence \textit{between} the processes. 
	As a result, for any two processes that display contrasting anisotropic behaviour, significant between-process dependence will be more evident at low frequencies, or large spatial scales. 

	Due to our modelling assumption of geometric anisotropy in the cross-dependence structure, the cross-spectrum will have elliptical contours of equal power density. From Figure \ref{fig2:MVGAspectra} we can also see that the elliptical geometries of the autospectra can dictate a nontrivial geometric structure for the upper bound of the cross spectrum. 
	For any given pair of marginal spectra, and thus a given upper bound to the corresponding cross-spectrum, the ellipticity of the true cross-spectrum will therefore impact its permissible coverage of the frequency space, as its elliptical structure must fit within the upper bound's nontrivial geometry. Indeed, we can see from Figure \ref{fig2:MVGAspectra} that, in order for our elliptical cross-spectrum to extend further into the higher-frequency regions of the Fourier space, the ellipticity of the cross-spectrum should be more pronounced; if we were to assume a more isotropic cross-dependence structure, then the non-negligible cross-spectrum would be more restricted to the low-frequency region around the origin. Since the overall power in the cross-process dependence is obtained by integrating the cross-spectrum over the entire Fourier domain, this gives us a link between the power and the degree of anisotropy in the cross-process dependence. We will formalise this relationship towards the end of the next section, in the context of a Mat\'ern specification for our multivariate dependence structure.

	\begin{figure}
		\centering
		\begin{subfigure}{0.275\textwidth}
			\includegraphics[width=\textwidth]{./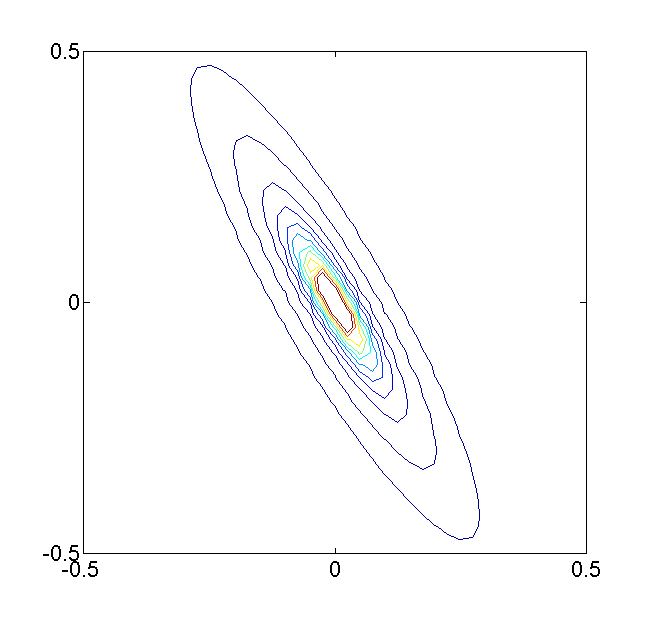}
			\label{subfig:MVGAautospec1}
		\end{subfigure}
		\begin{subfigure}{0.275\textwidth}
			\includegraphics[width=\textwidth]{./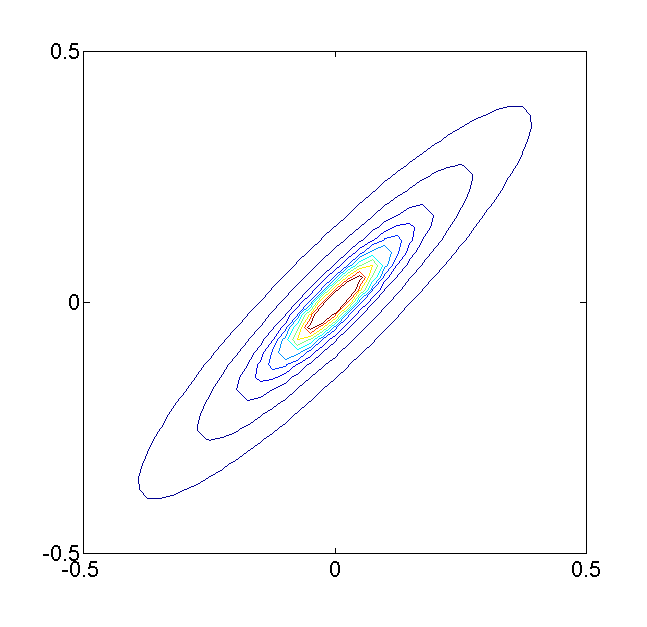}
			\label{subfig:MVGAautospec2}
		\end{subfigure}
		\begin{subfigure}{0.3\textwidth}
			\includegraphics[width=\textwidth]{./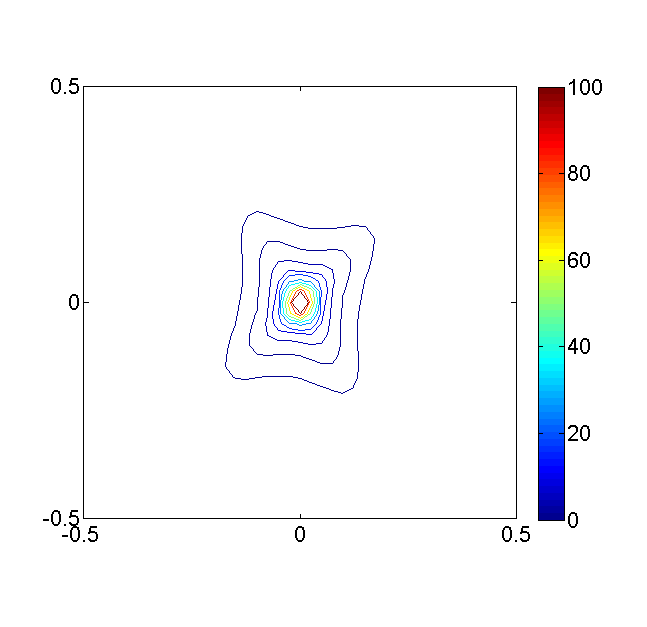}
			\label{subfig:MVGAcrossspecUB}
		\end{subfigure}
		\vspace{-20pt}
		\caption{Geometric anisotropic autospectra (left and centre) for a bivariate GRF, along with the upper bound on the corresponding cross-spectrum (right), as given in equation \eqref{eq:crossspectrumUB}.}
		\label{fig2:MVGAspectra}
	\end{figure}

	\subsection{A multivariate Mat\'ern correlation structure}
	\label{subsec:MVMaterncorr}
	The Mat\'ern family of correlation functions \citep{Stein99,Guttorp06} provides a flexible route to modelling multi-scale dependence in stationary spatial processes. For univariate random fields, one can use a single three-parameter covariance function to replicate dependence structures that act over any positive scale, whilst additionally controlling the smoothness of any realisations. The flexibility of this model has made it the tool of choice for modelling univariate processes in the spatial statistics literature, and there has naturally been a great deal of interest in extending its use to the multivariate setting. \citet{Gneiting10} and \citet{Apanasovich12} have recently addressed this interest, proposing the use of a Mat\'ern function to describe all auto- and cross-covariances for a multivariate isotropic stationary random field. This work has been further extended by \citet{Kleiber12}, who accommodate nonstationarity by allowing the Mat\'ern parameters to vary with respect to location; this allows for the possibility of local anisotropic behaviour, as well as local variances and smoothnesses. We will incorporate elements from these approaches in our modelling framework, though we retain an assumption of stationarity. Our aim is to ascertain whether observed second-order point process characteristics can be modelled independently of first-order covariates; this goal would be best served under assumptions of stationarity in the underlying GRF.

	We develop the stationary Mat\'ern covariance structure for the multivariate GRF $S(x)$, ensuring that all auto- and cross-covariances have a valid Mat\'ern form. The multivariate Mat\'ern model was introduced by \citet{Gneiting10}, who established necessary and sufficient conditions for the validity of the bivariate model, and sufficient conditions for the validity of a restricted subclass of the multivariate ($P\ge3$) model. This work was extended by \citet{Apanasovich12}, who relax the restrictions on the multivariate model and provide sufficient conditions for its validity for any dimension $P\ge1$. We present a class of Mat\'ern auto- and cross-covariance functions that can accommodate multivariate geometric anisotropy, and we adapt the work of \citet{Apanasovich12} in order to provide sufficient conditions for its validity, for $P\ge1$.

	Following \citet{Gneiting10}, we define the isotropic multivariate Mat\'ern covariance function to be
	\begin{eqnarray}
	\label{eq:mvMatern}
	\nonumber
	C_{0,pq}(\|h\|;\alpha_{pq},\nu_{pq},\sigma_{pq}) & = & \frac{\sigma_{pq}}{2^{\nu_{pq}-1}\Gamma\left(\nu_{pq}\right)} \left(\frac{2\displaystyle\sqrt{\nu_{pq}}}{\alpha_{pq}} \|h\|\right)^{\nu_{pq}} \mathcal{K}_{\nu_{pq}}\left(\frac{2\displaystyle\sqrt{\nu_{pq}}}{\alpha_{pq}} \|h\|\right),
	\hspace{10pt}
	h\in\mathbb{R}^d,
	\\
	\end{eqnarray}
	where $\mathcal{K}_{\nu}(\cdot)$ is the modified Bessel function of the second kind \citep[][pp.374--379]{Abramowitz65}. Here, $\sigma_{pq}\in\mathbb{R}$ ($\sigma_{pp}>0$) is the zero-lag covariance between field components $S_p$ and $S_q$, and $\alpha_{pq}>0$ and $\nu_{pq}>0$ are scale and smoothness parameters, respectively. The latter two parameters control the rate of decay of covariance between the same two processes with respect to distance. As a scale parameter, $\alpha_{pq}$ determines the `practical range' of the covariance function, i.e.~the separation distance at which $S_p$ and $S_q$ may be considered approximately independent. The smoothness parameter $\nu_{pq}$ determines the shape of the covariance function, and in particular the speed with which it decays close to the origin. For the marginal processes, $\nu_{pp}$ clearly then controls the smoothness of the realisations; indeed, the marginal process $X_p$ will be $m$ times mean-square differentiable if and only if $\nu_{pp}>m$.

	Throughout the literature, the Mat\'ern covariance function has been defined using a variety of parametric forms, with the three parameters interacting in a different manner in each specification. The predominant material difference between the parameterisations is the formulation of the term used to scale the absolute distance $\|h\|$. The inverse of this distance-scaling factor is also known as the correlation length and this is proportional to the practical range; as can be seen from \eqref{eq:mvMatern}, in the current parameterisation, the correlation length is equal to $\alpha_{pq}/2\sqrt{\nu_{pq}}$. For alternative parameterisations of the model where the correlation length is independent of $\nu_{pq}$, it is often found that the effects of $\alpha_{pq}$ and $\nu_{pq}$ on the practical range and shape of $C_{0,pq}(\|h\|;\alpha_{pq},\nu_{pq},\sigma_{pq})$ cannot be well separated. The parameterisation of the Mat\'ern function given in \eqref{eq:mvMatern}, attributable to \cite{Handcock94} in the univariate scenario, is chosen to allow maximal separation of the roles of $\alpha_{pq}$ and $\nu_{pq}$ in determining the second-order behaviour of $S(x)$ and, ultimately, the resulting point process $X$. 

	As $\alpha_{pq}$ increases for fixed $\nu_{pq}$, so too will the practical range of $C_{0,pq}(\|h\|;\alpha_{pq},\nu_{pq},\sigma_{pq})$. This will increase the maximum distance at which one can expect to find cross-process aggregation and segregation of points in $X_p$ and $X_q$. Note that the corresponding effect for the marginal scale parameters is that an increase (decrease) in $\alpha_{pp}$ will result in an increase (resp.~decrease) in the width of the observed clusters in $X_p$. Recall that the smoothness parameter controls the shape of the covariance function; as $\nu_{pq}$ increases, $C_{0,pq}(\|h\|;\alpha_{pq},\nu_{pq},\sigma_{pq})$ becomes smoother around $\|h\|=0$. As a result of our parameterisation, as $\nu_{pq}$ increases for fixed $\alpha_{pq}$, \eqref{eq:mvMatern} will increase for small values of $\|h\|$ and decrease for large values of $\|h\|$; the distribution of variance shifts from high scales to low scales. Thus, where the scale parameter $\alpha_{pq}$ determines the width of areas in which processes $X_p$ and $X_q$ will have similar intensities, $\nu_{pq}$ will determine {\it how} similar these intensity processes are within these regions. 

	Having established the Mat\'ern form of the auto- and cross-covariances for a multivariate isotropic GRF, we now generalise to allow for anisotropic multivariate covariance structures. Recall from Section \ref{sec:mvga} that we obtain our geometric anisotropic (cross-)covariance function by applying the deformation matrix $\Sigma_{pq}$:
	\begin{eqnarray}
	C_{pq}(h;\alpha_{pq},\nu_{pq},\sigma_{pq},\Sigma_{pq}) & = & \frac{\sigma_{pq}}{2^{\nu_{pq}-1}\Gamma\left(\nu_{pq}\right)} \left(\frac{2\sqrt\nu_{pq}}{\alpha_{pq}}\left\|\Sigma_{pq}^{-1/2}h\right\|\right)^{\nu_{pq}} \mathcal{K}_{\nu_{pq}}\left(\frac{2\sqrt\nu_{pq}}{\alpha_{pq}}\left\|\Sigma_{pq}^{-1/2}h\right\|\right),\nonumber\\
	\label{eq:mvgaMatern}
	\end{eqnarray}
	which is defined for any $h\in\mathbb{R}^d$. 
	
	Recall that we require the matrix $\left(C_{pq}(h;\alpha_{pq},\nu_{pq},\sigma_{pq},\Sigma_{pq})\right)_{p,q=1}^P$ to be nonnegative definite for all $h\in\mathbb{R}^d$, in order for \eqref{eq:mvgaMatern} to define a valid multivariate covariance model.	Satisfaction of this requirement can be guaranteed by placing the following conditions on the cross-covariance parameters $\{\alpha_{pq},\nu_{pq},\sigma_{pq},\theta_{pq},\zeta_{pq}, p\ne q\}$.


	\begin{condition}
	\label{cond:nu}
	There exists a nonnegative constant  $\Delta_\nu$ such that $\nu_{pq} - (\nu_{pp}+\nu_{qq})/2 = \Delta_{\nu}(1-A_{\nu,pq})$, $p,q=1,\ldots,P$, where $A_\nu$ is a valid $P\times P$ correlation matrix, with entries $0\le A_{\nu,pq}\le1$.
	\end{condition}
	\begin{condition}
	\label{cond:alpha}
	The matrix with elements $-4\nu_{pq}/\alpha_{pq}^2$, $p,q=1,\ldots,P$, is {\em conditionally} nonnegative definite. This is a weaker assumption than that of nonnegative definiteness, and it may be satisfied by a matrix containing only negative elements.
	\end{condition}

	\begin{condition}
	\label{cond:sigma}
	The matrix with elements 
	$$
	\frac{|\Sigma_{pq}|^{1/2}\sigma_{pq}\Gamma(\nu_{pq}+d/2)}{\pi^{d/2}\Gamma(\frac{\nu_{pp}+\nu_{qq}}{2}+\frac{d}{2})\Gamma(\nu_{pq})}\left(\frac{4\nu_{pq}}{\alpha_{pq}^2}\right)^{\Delta_{\nu}+\frac{\nu_{pp}+\nu_{qq}}{2}},
	\qquad
	p,q=1,\ldots,P,
	$$ 
	is nonnegative definite.
	\end{condition}

	\begin{condition}
	\label{cond:Sigma}
	The matrix with elements $-\|\Sigma_{pq}^{1/2}\omega\|^2$, $p,q=1,\ldots,P$, is conditionally nonnegative definite for any $\omega\in\mathbb{R}^d$. 
	\end{condition}

	\begin{proposition}
	\label{prop:mvgaMatern}
	For $p,q=1,\ldots,P$, let $\alpha_{pq}>0$, $\nu_{pq}>0$, $\sigma_{pq}\in\mathbb{R}$, $\theta\in[0,2\pi)$ and $\zeta\in(0,1]$. Then the multivariate geometric anisotropic Mat\'ern function \eqref{eq:mvgaMatern} specifies a valid multivariate covariance model if Conditions \ref{cond:nu}-\ref{cond:Sigma} are met. 
	\end{proposition}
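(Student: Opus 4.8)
The plan is to work entirely in the spectral domain, where, by Cram\'er's theorem for stationary multivariate fields, validity of \eqref{eq:mvgaMatern} is equivalent to requiring that the Hermitian matrix $(f_{pq}(\omega))_{p,q=1}^P$ be nonnegative definite at every $\omega\in\mathbb{R}^d$. First I would record the spectral density explicitly. The isotropic multivariate Mat\'ern covariance \eqref{eq:mvMatern} has the well-known spectral density proportional to $\sigma_{pq}\{\Gamma(\nu_{pq}+d/2)/\Gamma(\nu_{pq})\}\beta_{pq}^{2\nu_{pq}}(\beta_{pq}^2+\|\omega\|^2)^{-(\nu_{pq}+d/2)}$, where $\beta_{pq}^2=4\nu_{pq}/\alpha_{pq}^2$; applying the geometric deformation $f_{pq}(\omega)=|\Sigma_{pq}|^{1/2}f_{0,pq}(\sqrt{\omega^T\Sigma_{pq}\omega})$ from Section \ref{subsec:GALGCPs} then gives the closed form
\begin{equation*}
f_{pq}(\omega)=|\Sigma_{pq}|^{1/2}\frac{\sigma_{pq}\Gamma(\nu_{pq}+d/2)\beta_{pq}^{2\nu_{pq}}}{\pi^{d/2}\Gamma(\nu_{pq})}\bigl(\beta_{pq}^2+\omega^T\Sigma_{pq}\omega\bigr)^{-(\nu_{pq}+d/2)}.
\end{equation*}
The task reduces to showing that the matrix collecting these entries is nonnegative definite for every $\omega$.

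The central device is to linearise the rational radial profile through the gamma (Bernstein) representation $a^{-\mu}=\Gamma(\mu)^{-1}\int_0^\infty t^{\mu-1}e^{-at}\,dt$, taken with $a=\beta_{pq}^2+\omega^T\Sigma_{pq}\omega$ and $\mu=\nu_{pq}+d/2$, which yields
\begin{equation*}
f_{pq}(\omega)=|\Sigma_{pq}|^{1/2}\frac{\sigma_{pq}\beta_{pq}^{2\nu_{pq}}}{\pi^{d/2}\Gamma(\nu_{pq})}\int_0^\infty t^{\nu_{pq}+d/2-1}\,e^{-t\beta_{pq}^2}\,e^{-t\,\omega^T\Sigma_{pq}\omega}\,dt .
\end{equation*}
Since a convergent positive integral of nonnegative definite matrices is itself nonnegative definite, it suffices (after justifying the interchange of sum and integral by Fubini) to exhibit the matrix-valued integrand as a Hadamard product of nonnegative definite matrices for each fixed $t>0$. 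The Schur product theorem does the assembling, and each of Conditions \ref{cond:alpha}, \ref{cond:Sigma} and \ref{cond:sigma} supplies one factor: by Schoenberg's theorem, conditional nonnegative definiteness of $(-4\nu_{pq}/\alpha_{pq}^2)=(-\beta_{pq}^2)$ (Condition \ref{cond:alpha}) makes $(e^{-t\beta_{pq}^2})$ nonnegative definite for all $t\ge0$; likewise Condition \ref{cond:Sigma} makes $(e^{-t\,\omega^T\Sigma_{pq}\omega})=(e^{-t\|\Sigma_{pq}^{1/2}\omega\|^2})$ nonnegative definite for every $\omega$ and $t\ge0$; and Condition \ref{cond:sigma} is precisely the assertion that the remaining constant matrix, carrying the determinant factors $|\Sigma_{pq}|^{1/2}$, the colocated covariances $\sigma_{pq}$ and the gamma ratios, is nonnegative definite.

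The step I expect to be the main obstacle is the one factor that does not separate cleanly, namely the power $t^{\nu_{pq}+d/2-1}$ together with the residual scale term $\beta_{pq}^{2\nu_{pq}}$. Substituting the parameterisation of Condition \ref{cond:nu}, $\nu_{pq}+d/2=\Delta_\nu+(\nu_{pp}+\nu_{qq})/2+d/2-\Delta_\nu A_{\nu,pq}$, isolates a residual that, after normalising the radial variable, takes the form $(1+\omega^T\Sigma_{pq}\omega/\beta_{pq}^2)^{\Delta_\nu A_{\nu,pq}}$; because $A_{\nu,pq}$ ranges in $[0,1]$ and the base exceeds one, this factor is \emph{not} nonnegative definite term by term, so a naive Schur argument fails here. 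This is exactly where Condition \ref{cond:nu} is indispensable: expressing the smoothness exponent through a valid correlation matrix $A_\nu$ allows the residual to be written as a completely monotone function composed with a conditionally nonnegative definite argument, which is nonnegative definite by a further appeal to Bernstein's and Schoenberg's theorems. This is the technical heart inherited from \citet{Apanasovich12}, and my plan is to transcribe their argument with two modifications: the radial argument now carries the anisotropy as $\omega^T\Sigma_{pq}\omega$ rather than $\|\omega\|^2$ (absorbed by Condition \ref{cond:Sigma}), and the normalising constants acquire the determinant factors $|\Sigma_{pq}|^{1/2}$ (absorbed by Condition \ref{cond:sigma}). Once the integrand is certified nonnegative definite for each $t$, integrating over $t$, and noting that on the diagonal $A_{\nu,pp}=1$ so the entries reduce to valid univariate anisotropic Mat\'ern spectra, completes the proof.
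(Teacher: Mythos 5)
Your proposal is correct and takes essentially the same route as the paper's proof: both argue in the spectral domain via Cram\'er's theorem, factor the anisotropic Mat\'ern spectral matrix into the same pieces, and assign Conditions \ref{cond:alpha}, \ref{cond:Sigma} and \ref{cond:sigma} to the $e^{-t\beta_{pq}^2}$ factor, the $e^{-t\,\omega^{T}\Sigma_{pq}\omega}$ factor and the constant matrix respectively, reserving Condition \ref{cond:nu} for the residual $(1+\omega^{T}\Sigma_{pq}\omega/\beta_{pq}^{2})^{\Delta_{\nu}A_{\nu,pq}}$ --- which is precisely the paper's Term II, treated there by the same logarithmic product expansion of \citet{Apanasovich12} that you propose to transcribe. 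The only cosmetic difference is that you make the gamma-integral linearisation explicit where the paper hides it inside its cited Lemmas 1 and 2; your passing description of that residual as a completely monotone function of a conditionally nonnegative definite argument is imprecise (the map $x\mapsto(1+x)^{c}$ with $c\ge0$ is not completely monotone), but since you defer to the source argument for that step the plan is sound.
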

	The proof of Proposition \ref{prop:mvgaMatern} is given in the Appendix, 
	and follows a similar argument to the proof of Theorem 1 of \citet{Apanasovich12}.

	\begin{remark}
	\label{rmk:zeta}
	If Condition \ref{cond:Sigma} holds, then the $P\times P$ matrix with $(p,q)$-element $\left|\Sigma_{pq}\right|^{-1/2} = \zeta_{pq}^{-1},$ will be nonnegative definite; in particular, we can deduce $\zeta_{pq}^2\ge\zeta_{pp}\zeta_{qq}$, for all $p,q=1,\ldots,P.$
	\end{remark}

	Conditions \ref{cond:nu}-\ref{cond:Sigma} are similar in spirit to those placed by \citet{Apanasovich12} on the Mat\'ern parameters in order to guarantee a valid multivariate dependence structure in an isotropic setting. In the simpler isotropic framework, the three conditions specified by \citet{Apanasovich12} are sufficient to guarantee nonnegative definiteness of the resulting spectral density, and also to guarantee that all absolute zero-lag cross-correlations are bounded above by one. In the more general geometric anisotropic setting, we require a more extensive specification. Conditions \ref{cond:nu}-\ref{cond:Sigma}, above, are sufficient to guarantee nonnegative definiteness of the geometric anisotropic spectral density, and are also sufficient for the absolute colocated cross-correlations to be bounded above by 1. 

	These conditions constitute a set of implicit relationships that, between them, specify a valid multivariate geometric anisotropic LGCP. We will now provide explicit restrictions on the cross-dependence parameters in terms of the marginal dependence parameters. This will allow users to sequentially construct a valid multivariate model by first specifying the marginal covariances, and then conditionally specifying the cross-covariance structures. This sequential approach to model construction will also be reflected in our model-fitting procedures in Section \ref{sec:modelfitting}.

	Trivial rearrangement of Condition \ref{cond:nu} yields an explicit expression for $\nu_{pq}$ in terms of the corresponding marginal values. In Remarks \ref{rmk:alpha}-\ref{rmk:equicorr}, below, we provide similar constructions for the cross-covariance parameters $\alpha_{pq},$ $\sigma_{pq}$, $\theta_{pq}$ and $\zeta_{pq}$, such that Conditions \ref{cond:nu}-\ref{cond:Sigma} may be satisfied. The proofs for Remarks \ref{rmk:alpha}-\ref{rmk:equicorr} are given in the Appendix. 

	\begin{remark}
	\label{rmk:alpha}
	Condition \ref{cond:alpha} is satisfied by the parameters $\{\nu_{pq},\alpha_{pq}\;;\;p,q=1,\ldots,P\}$ if 
	\begin{equation*}
	\frac{4\nu_{pq}}{\alpha_{pq}^2} = \frac12\left(\frac{4\nu_{pp}}{\alpha_{pp}^2}+\frac{4\nu_{qq}}{\alpha_{qq}^2}\right) + \Delta_{\alpha}\left(1-A_{\alpha,pq}\right),
	\end{equation*}
	for some constant $\Delta_{\alpha}\ge0$ and for some $0\le A_{\alpha,pq}\le1$ that form a valid correlation matrix. This remark is also made by \citet{Apanasovich12} in their chosen Mat\'ern parameterisation.
	\end{remark}

	\begin{remark}
	\label{rmk:sigma}
	Condition \ref{cond:sigma} is satisfied by the parameters $\{\nu_{pq},\alpha_{pq},\zeta_{pq},\sigma_{pq}\;;\;p,q=1,\ldots,P\}$ if
	\begin{equation*}
	\sigma_{pq} = \frac{\pi^{d/2} V_{p}V_{q}A_{\sigma,pq} }{\zeta_{pq}} 	
	\left(\frac{4\nu_{pq}}{\alpha_{pq}^2}\right)^{-\Delta_{\nu}-\frac{\nu_{pp}+\nu_{qq}}{2}}
	\frac{\Gamma(\frac{\nu_{pp}+\nu_{qq}}{2}+\frac{d}{2})\Gamma(\nu_{pq})}{\Gamma(\nu_{pq}+\frac{d}{2})}
	\qquad p,q=1,\ldots,P,
	\end{equation*}
	for constants $V_p,V_q\ge0$ and for some $A_{\sigma,pq}\in[-1,1]$ that form a valid correlation matrix.
	\end{remark}

	\begin{remark}
	\label{rmk:Sigma}
	Condition \ref{cond:Sigma} is satisfied by the deformation matrices $\{\Sigma_{pq}\;;\;p,q=1,\ldots,P\}$ if their diagonal elements $\left[\Sigma_{pq}\right]_{ii}$, can be written 
	$$
	\left[\Sigma_{pq}\right]_{ii} = \frac12\left[\Sigma_{pp}+\Sigma_{qq}\right]_{ii} + \Delta^{(i)}_{\Sigma}\left(1-A^{(i)}_{\Sigma,pq}\right),\qquad i=1,2.
	$$
	\end{remark}

	\begin{remark}
	\label{rmk:equicorr}
	For small $P$, we can follow the lead of \citet{Apanasovich12} and use equicorrelated matrices $A^{(i)}_{\Sigma}$, $i=1,2$, setting $A^{(i)}_{\Sigma,pq}=\rho^{(i)}_{\Sigma}$, $p\ne q$; in this scenario, for the sake of identifiability, we redefine $\Delta_{\Sigma}^{(i)}:=\Delta_{\Sigma}^{(i)}(1-\rho_{\Sigma}^{(i)})$, $i=1,2$. 
	\end{remark}

	Conditions \ref{cond:nu}-\ref{cond:Sigma}, along with Remarks \ref{rmk:alpha}-\ref{rmk:Sigma}, indicate a sequential approach to specifying a valid multivariate geometric anisotropic Mat\'ern covariance structure in practice. As mentioned previously, Condition \ref{cond:nu} and Remarks \ref{rmk:alpha}-\ref{rmk:Sigma} suggest that one must specify the parameters for the marginal covariance function before conditionally specifying the parameters for each cross-covariance function. These statements also indicate that, within each individual component of the joint model, i.e.~for fixed $p,q$, there is a particular order in which the five parameters $(\theta_{pq},\zeta_{pq},\alpha_{pq},\nu_{pq},\sigma_{pq})$ should necessarily be specified. From Remark \ref{rmk:sigma}, we can see that, for each $(p,q)$ pairing, the specification of the Mat\'ern power parameter $\sigma_{pq}$ is dependent upon the corresponding ratio of anisotropy $\zeta_{pq}$, as well as the other Mat\'ern parameters, $\alpha_{pq}$ and $\nu_{pq}$, and Remark \ref{rmk:Sigma} indicates that the anisotropy parameters $(\theta_{pq},\zeta_{pq})$ should be jointly specified. In addition, Condition \ref{cond:nu} and Remark \ref{rmk:alpha} indicate that the smoothness parameter $\nu_{pq}$ should be specified before the scale parameter $\alpha_{pq}$. We conclude that, for each marginal or bivariate component of the joint covariance model, the anisotropy parameters should be specified before the Mat\'ern parameters, with the Mat\'ern smoothness, scale and power parameters being specified third, fourth and fifth, respectively.

	We conclude this section by considering the limitations placed on the zero-lag cross-correlation coefficients $\rho_{pq}:=\sigma_{pq}/\sqrt{\sigma_{pp}\sigma_{qq}}$. By rearranging Condition \ref{cond:sigma}, we can write:
	\begin{equation}
	\label{eq:colocxcorrs}
	\rho_{pq}^{2} = \frac{\sigma_{pq}^2}{\sigma_{pp}\sigma_{qq}} \le \prod_{i=1}^4\tau_{pq}^{(i)}\le1,
	\end{equation}
	with 
	\begin{equation*}
	\tau_{pq}^{(1)} = \frac{\mathcal{B}^2(\nu_{pq},\frac{d}{2})}{\mathcal{B}^2(\frac{\nu_{pp}+\nu_{qq}}{2},\frac{d}{2})},
	\hspace{15pt}
	\tau_{pq}^{(2)} = \left[\frac{\frac{4\nu_{pp}}{\alpha_{pp}^2}\frac{4\nu_{qq}}{\alpha_{qq}^2}}{\left(\frac{4\nu_{pq}}{\alpha_{pq}^2}\right)^2}\right]^{\Delta_{\nu}},
	\end{equation*}
	\begin{equation*}
	\tau_{pq}^{(3)} = \frac{\Gamma^2(\frac{\nu_{pp}+\nu_{qq}}{2})\left(\frac{\alpha_{pq}^2}{4\nu_{pq}}\right)^{\nu_{pp}+\nu_{qq}}}
	{\Gamma(\nu_{pp})\left(\frac{\alpha_{pp}^2}{4\nu_{pp}}\right)^{\nu_{pp}}\Gamma(\nu_{qq})\left(\frac{\alpha_{qq}^2}{4\nu_{qq}}\right)^{\nu_{qq}}},
	\hspace{10pt}
	\tau_{pq}^{(4)} = \frac{|\Sigma_{pp}|^{1/2}|\Sigma_{qq}|^{1/2}}{|\Sigma_{pq}|} = \frac{\zeta_{pp}\zeta_{qq}}{\zeta_{pq}^2},
	\end{equation*}
	where $\mathcal{B}(\cdot,\cdot)$ is the Beta function \citep{Abramowitz65}. The first inequality in \eqref{eq:colocxcorrs} is directly implied by Condition \ref{cond:sigma}. The second inequality in \eqref{eq:colocxcorrs} can be shown componentwise: By Remark \ref{rmk:zeta}, Condition \ref{cond:Sigma} ensures that $\tau_{pq}^{(4)}\le1$, and as noted by \citet{Apanasovich12} in the isotropic framework, Conditions \ref{cond:nu} and \ref{cond:alpha} are sufficient to guarantee that $\tau_{pq}^{(i)}\le 1$, $i=1,2,3$.

	In the isotropic framework, $\tau_{pq}^{(4)}=1$, and we are left with the limitations noted by \citet{Apanasovich12}: the zero-lag cross-correlation will be bounded above by 1 when the corresponding univariate isotropic processes share identical Mat\'ern parameters. When the marginal parameter specifications differ, this upper bound will decrease as the smoothness and inverse correlation length of the cross-covariance structure depart from the arithmetic mean of the corresponding marginal quantities.

	In our more general anisotropic framework, we can see from $\tau_{pq}^{(4)}$ in \eqref{eq:colocxcorrs} that the upper bound on the colocated cross-correlations will also be affected by the relationship between the cross-covariance ratio of anisotropy $\zeta_{pq}$ and the ratios of anisotropy in the corresponding marginal covariance structures. 
	If we assume Condition \ref{cond:Sigma} to hold, then by Remark \ref{rmk:zeta}, $\zeta_{pq}$ will be restricted to the closed interval $[\zeta_{pp}^{1/2}\zeta_{qq}^{1/2},1]$. If $\zeta_{pq}=\zeta_{pp}^{1/2}\zeta_{qq}^{1/2}$, then $\tau_{pq}^{(4)}$ will reduce to 1, and the upper bound of the colocated cross-correlation $\rho_{pq}$ will behave as in the isotropic framework, i.e.~as described above. Increasing $\zeta_{pq}$ away from this geometric mean, however, will decrease $\tau_{pq}^{(4)}$, which will in turn shrink the upper bound on $\rho^2_{pq}$, given in \eqref{eq:colocxcorrs}.
	In other words, as the ellipticity of the cross-covariance function becomes less pronounced, the maximum possible degree of zero-lag correlation between the two components of the field will decrease. This formalizes the relationship between the power and the anisotropy of the cross-process dependence, discussed at the end of Section \ref{sec:mvga}.

%


	\section{Fitting the Model}
	\label{sec:modelfitting}

	\subsection{Parameter Estimation Procedure}
	\label{subsec:paramest}

	In order to fit our parametric model to an observed multitype point pattern, we must estimate both the marginal and joint anisotropy parameters $\{\theta_{pq},\zeta_{pq};p,q=1,\ldots,P\}$, as well as the parameters that specify the mean and Mat\'ern covariance structure of the underlying Gaussian random field, $\{\mu_p,\alpha_{pq},\nu_{pq},\sigma_{pq};p,q=1,\ldots,P\}$. At a high level, we follow the approach of \citet{Moller14}, who fit a univariate version of our model by first estimating the angle and ratio of anisotropy in the observed data, before using these estimates to back-transform the data into an isotropic framework. The resulting `isotropised' point pattern is then used to estimate the mean parameters and the Mat\'ern parameters. Our approach to each component of this two-stage model-fitting procedure will differ from the methods of \citet{Moller14}, however. We use an approach to estimating anisotropy that is less sensitive to user-specified tuning parameters, which we adapt from the work of \citet{Rajala16}, and we use a more automatable approach to estimating the mean and Mat\'ern parameters, which we develop from the work of \citet{Tanaka08}.

	In developing our parameter estimation methodology, we are faced with the question of whether to put measures into place to guarantee that the fitted model satisfies Conditions \ref{cond:nu}-\ref{cond:Sigma}, therefore ensuring validity of the multivariate dependence structure. This is the approach taken by \citet{Apanasovich12} for fitting multivariate isotropic Mat\'ern GRFs; they fit the marginal dependence structures then use the estimated marginal parameters to restrict the parameter subspace for the Mat\'ern cross-covariances. Since Conditions \ref{cond:nu}-\ref{cond:Sigma} are sufficient, and not necessary, the resulting restriction on the joint dependence structure could be overstated, potentially resulting in inconsistent estimators for the Mat\'ern cross-covariance parameters. Under the assumption that the smoothness is known, however, the power and scale parameters for a univariate Mat\'ern covariance function cannot be consistently estimated under infill asymptotics \citep{Zhang04}; consistency can only be achieved by increasing the observation window $W$. As noted by \citet{Apanasovich12}, constraining $\sigma_{pq}^2$ and $\alpha_{pq}^2$ ($p\ne q$) conditional on their corresponding marginal values therefore provides no additional penalty in terms of estimator consistency when assuming a fixed observation window. Furthermore, the numerical tests of \citet{Apanasovich12} show that reasonable accuracy can indeed be obtained when using this constrained approach to parameter estimation; this approach therefore warrants examination in the current framework. In order to avoid compromising the consistency of the anisotropy estimators, we do not use our conditions from Section \ref{sec:model} to restrict the parameter pair $(\theta_{pq},\zeta_{pq})$, $p\ne q$.

	\subsection{Estimating the Anisotropy Parameters}

We focus first on quantifying the anisotropy present in both the marginal and joint dependence structures in a multi-type point pattern. \citet{Moller14} estimate the angle of anisotropy in a univariate geometric anisotropic point pattern by finding the angle $\phi$ at which the $r$-integrated difference between the anisotropic pair correlation function $g^a(r,\phi)$ and its phase-shifted self $g^a(r,\phi+\pi/2)$, is maximised. This is achieved by estimating $g^a(r,\phi)$ over a discrete lattice of polar coordinates $(r,\phi)$, and numerically approximating the required integral in $r$. Accuracy of the resulting estimator is therefore sensitive to the resolution of the polar lattice, as well as the choice of two bandwidth parameters used in estimating the anisotropic pair correlation function; for details of these bandwidth parameters, see \citet{Moller14}. Finally, use of this estimation method is also dependent on the assumption that the isotropic pair correlation function is strictly decreasing. Whilst this assumption holds true for our assumed Mat\'ern model, it can be violated by real data. The approach we detail below is more widely applicable, as it does not depend on such an assumption, and it is also less sensitive to subjective choices of bandwidth parameters.

We adopt and adapt the method introduced by \citet{Rajala16} for estimating the angle of anisotropy: we adopt this method for characterising anisotropy in the marginal covariance structures, and we adapt it for estimating the angle of anisotropy in the cross-covariance structures. For the sake of generality, we describe the procedure for estimating $\theta_{pq}$, $p\ne q$. We start by constructing the point pattern formed by the difference vectors $\{
x_{p,i}-x_{q,j}\;;\; i=1,\ldots,n_p, j=1,\ldots,n_q\}$; this is the (bivariate) Fry process \citep{Fry79}, and when $p=q$, this will be rotationally symmetric of order 2, about the origin.
The Fry process is useful here as its first-order properties will reflect the second-order properties of the original point pattern. We can therefore estimate any second-order anisotropy in the original bivariate point pattern by estimating the anisotropy in the intensity of the bivariate Fry process. 


Dividing the polar plane into a selected number, $n_F$, of distinct sectors, and for $l\in L\subset\mathbb{N}$, we collect the $l$th nearest Fry point in each sector into a set, $G_l$, of $n_F$ points, such that each $G_l$ sketches out a noisy contour around the origin, and such that the intensity of the Fry process is reflected in the proximity of the $G_l$s to one another. For point patterns that display segregation, the anisotropy in the joint second-order dependence structure will be shared by the contours of the intensity field for the Fry process; for aggregated point patterns, the angle of anisotropy will be phase-shifted by $\pi/2$ in the Fry process. In order to quantify the anisotropy in the original point pattern then, we can treat the $G_l$s as sampled versions of the Fry intensity's contours, and assuming Gaussian measurement error we can infer the corresponding true contours using adjusted ordinary least squares, and subsequently derive the angle of anisotropy in the original point pattern. 
For full technical details of this method, we direct the reader to \citet{Rajala16}. 

For each marginal process, as described by \citet{Moller14}, we can transform the observed point pattern $X_p\cap W$ and the corresponding observation window $W$ by assuming fixed values for $\theta\in[0,2\pi)$ and $\zeta\in[0,1]$:
\begin{eqnarray*}
X_{p,\theta,\zeta} = X_p R_{\theta}^T 
\left(
\begin{array}{cc}
1	&	0	\\
0	&	\zeta^{-1}
\end{array}
\right)
,\qquad&&\qquad 
W_{\theta,\zeta} = W R_{\theta}^T 
\left(
\begin{array}{cc}
1	&	0	\\
0	&	\zeta^{-1}
\end{array}
\right)
.
\end{eqnarray*}

If the chosen values of $\theta$ and $\zeta$ are equal to the values that describe the anisotropy of $X_p$, then the transformed point process $X_{p,\theta,\zeta}$  will be isotropic and the corresponding anisotropic pair correlation function $g^a_{pp,\theta,\zeta}(r,\phi)$ will be constant with respect to its second argument. This motivates our chosen method for estimating the \textit{marginal} anisotropy ratios $\zeta_{pp}$, which we also adopt from the work of \citet{Rajala16}. 

Following \citet{Rajala16}, we define the following directional discrepancy statistic:
\begin{equation}
\label{eq:intdiffK}
V_{pp,\theta}(\zeta) = \int_{b_1}^{b_2} \left[K^a_{pp,\theta,\zeta}(r,0) - K^a_{pp,\theta,\zeta}(r,\pi/2)\right] dr,
\end{equation}
where
\begin{equation}
\label{eq:sectorKfun}
K^a_{pp,\theta,\zeta}(r,\phi) = \int_0^r g^a_{pp,\theta,\zeta}(s,\phi)ds
\end{equation}
is the sector-$K$-function, an anisotropic variant of Ripley's $K$-function, evaluated on the isotropised point pattern $X_{p,\theta,\zeta}$. To estimate the marginal ratio of anisotropy $\zeta_{pp}$, we back-transform our observed point pattern using the estimated angle of anisotropy $\hat{\theta}_{pp}$ and a sequence of candidate ratios $\{\zeta_{pp,k}:=k\zeta_{max}/(1+n_{\zeta}),\; k=1,\ldots,n_{\zeta}\}$, for some user-defined upper bound $\zeta_{max}$. We then choose $\hat{\zeta}_{pp} = \zeta_{pp,k}\in(0,\zeta_{max})$ to be the candidate value that minimises the estimate $\hat{V}_{pp,\hat{\theta}_{pp}}(\zeta_{pp,k})$. Note that, although we defined $\zeta_{pp}\in(0,1)$ in Section \ref{subsec:GALGCPs}, the sampling variance of the estimated sector-$K$-function can result in an estimated ratio $\hat{\zeta}_{pp}>1$. 

\citet{Moller14} use a similar approach, in effect minimising the directional discrepancy statistic \eqref{eq:intdiffK}, but using the anisotropic pair correlation function in place of the sector-$K$-function. Indeed, it is possible to use any directional second-order statistic in the integrand of \eqref{eq:intdiffK}. We choose to use $K^a_{pp,\theta,\zeta}$ for two reasons. Firstly, the analysis of \citet{Redenback09} suggests that the sector-$K$-function is better-suited to characterising anisotropy than nearest-neighbour statistics; the authors conclude that, for detecting anisotropy in point patterns, statistical tests based on the sector-$K$-function have greater power, in general, than those based on nearest-neighbour orientation statistics. Secondly, estimation of the sector-$K$-function requires the choice of only one tuning parameter, an angular bandwidth, whereas the use of the anisotropic pair correlation function would require the specification of both angular and radial bandwidths.

Our chosen approach to estimating $\zeta_{pp}$ can be extended to the multivariate scenario, where we are interested in the geometric anisotropic cross-dependence exhibited by a given pair of Cox processes $X_p$ and $X_q$. By manipulating the space $\mathbb{R}^d$ on which both processes live, we also manipulate the cross-covariance function $C_{pq}(h)$ that specifies the dependence between the random fields that drive $X_p$ and $X_q$. For each pair of processes, we once again define a discrete set of candidate multivariate anisotropy ratios $\{\zeta_{pq,k}\in(0,\zeta_{max}), k=1,\ldots,n_{\zeta}\}$, and we choose $\hat{\zeta}_{pq} = \zeta_{pq,k}$ for which the estimated value of $V_{pq,\hat{\theta}_{pq}(\zeta_{pq,k})}$ is minimised, where $V_{pq,\hat{\theta}_{pq}(\zeta_{pq,k})}$ is defined through transforming both $X_p\cap W$ and $X_q\cap W$, along with their common observation window $W$.

The above approach to estimating the anisotropy parameters requires the selection of a number of control parameters: the number of sectors $n_F$, into which we partition the Fry process; the number $n_\zeta$ of candidate ratios of anisotropy, as well as their upper bound $\zeta_{max}$; and the limits of integration, $b_1$ and $b_2$ in \eqref{eq:intdiffK}, which we use to calculate $\hat{V}_{pq,\hat{\theta}_{pq}}(\zeta_{pq,k})$ when estimating $\zeta$. As a rule of thumb, and for reasons outlined below, \citet{Rajala16} suggest choosing $n_F\approx\lambda|W|/6$, where 
$\lambda|W|$ is the expected number of points in the original point process. 
We adopt this guideline for choosing $n_F$ when estimating the anisotropy in the marginal processes, and we derive a similar rule of thumb for $n_F$ when estimating the between-process anisotropy, by following the same arguments as \citet{Rajala16}. For the bivariate Poisson process with intensity vector $(\lambda_p,\lambda_q)$ in a circular spatial window $W$, the expected number of bivariate Fry points per sector is approximately $\lambda_p\lambda_q|W|^2/3n_F$. 
Each point in the bivariate process can be expected to contribute if there are at least $(\lambda_p+\lambda_q)|W|$ points per sector, and so we have a bivariate direction count rule of $n_F\approx \lambda_p\lambda_q|W|/3(\lambda_p+\lambda_q)$. 
Selection of both $n_{\zeta}$ and $\zeta_{max}$ is straightforward: $\zeta_{max}$ should be chosen such that $(0,\zeta_{max})$ covers the majority of the sampling distribution of $\zeta_{pq}$, and selection of $n_{\zeta}$ involves a trade-off between accuracy in the resulting estimates and computational expense of the estimation procedure. In Section \ref{sec:implementation}, where we implement our model fitting procedure for both simulated data and tropical rainforest data, we use $\zeta_{max}=2$ and $n_{\zeta}=199$ for estimating all marginal and joint ratios of anisotropy. Choice of the limits of integration, $b_1$ and $b_2$ in \eqref{eq:intdiffK}, is a more subjective task, and should be determined by the range of scales over which dependence (either within, or between processes) is sought to be characterised; these need not be the same for all marginal and cross-dependence relationships being estimated. In Section \ref{sec:implementation}, we detail our choices of these limits of integration.

\subsection{Estimating the Mat\'ern Parameters}
\label{subsec:Maternestimation}
Once we have estimated our anisotropy parameters, we can isotropise the point pattern and its observation window, and use this transformed data to estimate the remaining parameters. In order to ensure that the Mat\'ern parameters satisfy Conditions \ref{cond:nu}-\ref{cond:sigma}, we define $\nu_{pq}$, $\alpha_{pq}$ and $\sigma_{pq}$ according to the specifications in Condition \ref{cond:nu}, Remark \ref{rmk:alpha} and Remark \ref{rmk:sigma}, respectively. 
Techniques for modelling the correlation matrices $A_{\nu}$, $A_{\alpha}$ and $A_{\sigma}$ are discussed by \citet{Apanasovich10}, and the reader is directed there for further details.
When $P$ is small, however, we can simplify our task by assuming the off-diagonal elements of $A_{\alpha},A_{\nu},A_{\sigma}$ to be constant \citep{Apanasovich12}.

In order to estimate both the mean and Mat\'ern parameters, we maximise the Palm log-likelihood, first proposed by \citet{Tanaka08}. 
For estimating the marginal parameters, we use the version of the Palm log-likelihood given by \citet{Dvorak12}, where the inner region correction is proposed to deal with edge effects:
\begin{multline}
\label{eq:palmloglik}
\ell(\lambda_p,\alpha_{pp},\nu_{pp},\sigma_{pp}) \approx \sum_{\substack{x_{p,i}\in X_{p,\theta,\zeta}\cap W_{\theta,\zeta}\setminus R \\x_{p,j}\in X_{p,\theta,\zeta}\cap W_{\theta,\zeta}\\r_{ij}<R}}^{\ne}
\log\left\{\lambda_p g_{pp}(r_{ij};\alpha_{pp},\nu_{pp},\sigma_{pp})\right\}
\\
- \lambda_p|X_p\cap W\setminus R| K_{p}(R;\alpha_{pp},\nu_{pp},\sigma_{pp}),
\end{multline}
where 
$r_{ij}=\|x_{p,i}-x_{p,j}\|$, $K_{p}(r;\alpha_{pp},\nu_{pp},\sigma_{pp})$ is Ripley's univariate $K$-function, which we approximate by numerically integrating the corresponding pair correlation function, and $|X_{p,\theta,\zeta}\cap W\setminus R|$ denotes the number of points in the isotropised pattern $X_{p,\theta,\zeta}$ that lie further than a distance $R$ from the boundary of $W_{\theta,\zeta}$. $R$ is a user-defined tuning parameter that can be objectively set based on the data; this is discussed further in Section \ref{sec:implementation}. As is common in the point pattern literature, we use $\neq$ in the summation notation to indicate summation over pairs of distinct points.

The Palm log-likelihood \eqref{eq:palmloglik} can be analytically maximised with respect to $\lambda_p$, yielding the maximum Palm-likelihood estimate (MPLE) $\hat{\lambda}_p$, and we obtain MPLEs for the remaining marginal Mat\'ern parameters by numerically maximising $\ell(\hat{\lambda}_p,\alpha_{pp},\nu_{pp},\sigma_{pp})$. The MPLE for $\mu_p$ can be subsequently calculated according to \eqref{eq:lamp_mup}.

 We further develop the Palm log-likelihood approach, in order to estimate the parameters for the cross-covariance structure; our bivariate Palm log-likelihood follows a similar construction to the marginal version. First, we obtain the symmetric bivariate Fry process for components $X_p$ and $X_q$, using the inner region correction to deal with edge effects. We then treat this Fry process as an inhomogeneous Poisson process, with intensity equal to a bivariate version of the Palm intensity \citep{Daley08,Prokesova13}, which we define heuristically as follows: for $x$ at distance $r$ from the origin $o$, the occurrence rate of process $q$ at $x\in\left\{\mathbb{R}^2:\|x\|=r\right\}$, assuming there to be a point of process $p$ at the origin, is
 $$
\lambda_{0,pq}(x)dx = \mathbb{P}\left(|X_q \cap dx|=1\big||X_p\cap \{o\}|=1\right),
 $$ 
 where $dx$ is the Lebesgue measure for the infinitesimal set at $x$. Following this definition, we can relate the bivariate Palm intensity to the (isotropic) cross-pair correlation function for the original process: 
 $$
 \lambda_{0,pq}(r) = \lambda_q g_{0,pq}(r),
 $$
 and this allows us to obtain the following bivariate Palm log-likelihood, which can be maximised to obtain estimates for $\alpha_{pq}$, $\nu_{pq}$, $\sigma_{pq}$, $p\ne q$:

\begin{multline}
\label{eq:bivpalmloglik}
\ell(\lambda_p,\lambda_q,\alpha_{pq},\nu_{pq},\sigma_{pq}) \approx \sum_{\substack{x_{p,i}\in X_p\cap W_{\theta,\zeta}\\x_{q,j}\in X_q\cap W_{\theta,\zeta}\\r_{ij}<R}}^{\ne}
\log\left\{(\lambda_p+\lambda_q) g_{pq}(r_{ij};\alpha_{pq},\nu_{pq},\sigma_{pq})\right\}
\\
- \Big(|X_q\cap W\setminus R|\lambda_p+|X_p\cap W\setminus R|\lambda_q\Big) K_{pq}(R;\alpha_{pq},\nu_{pq},\sigma_{pq}),
\end{multline}
where $r_{ij}=\|x_{p,i}-x_{q,j}\|$, $K_{pq}(r;\alpha_{pq},\nu_{pq},\sigma_{pq})$ is Ripley's bivariate $K$-function, and $|X_p\cap W\setminus R|$ denotes the number of observed points in process $p$ that lie further than a distance $R$ from the boundary of the window $R$. By substituting our previous estimates of $\lambda_p$ and $\lambda_q$ into \eqref{eq:bivpalmloglik}, we obtain an expression in terms of the Mat\'ern cross-covariance parameters only. We numerically maximise this expression in $(\alpha_{pq},\nu_{pq},\sigma_{pq})$ over the constrained parameter space described by Condition \ref{cond:nu}, Remark \ref{rmk:alpha} and Remark \ref{rmk:sigma}, and dependent on the corresponding estimated marginal Mat\'ern parameters. As described in Section \ref{subsec:paramest}, the use of constrained optimisation should not affect the consistency of the cross-covariance parameter estimators, however they may display some bias due to the truncation of their supports. 




\section{Implementation}
\label{sec:implementation}

\subsection{Proof of concept simulations}
\label{subsec:proofofconcept}
We demonstrate the validity of the model fitting procedure described in Section \ref{sec:modelfitting}, through a series of Monte Carlo simulation studies. Using the restrictions in Section \ref{sec:model}, we define four distinct bivariate geometric anisotropic LGCPs with valid Mat\'ern covariance structures; the parameter values for each model are given in Table \ref{table:MCsims_fullMatern}. For all four models, the parameters are chosen such that the expected log-intensity for each process component, $\log(\lambda_p)= 6.75$ ($p=1,2$), specifying point patterns with a similar intensity to the ecological data to be considered in Section \ref{subsec:BCIimplementation}. For each of the four fully-specified models, we simulate 500 distinct point patterns on the unit square, $W=[0,1]^2$.

For each model, and for $p,q=1,2$, we executed our parameter estimation procedure as described in Section \ref{subsec:paramest}. 
For both the marginal and cross-dependence relationships, we estimate $\theta_{pq}$ using Fry processes consisting of only those point pairs separated by $r\in(0,0.25)$. Similarly, when estimating $\zeta_{pq}$, we numerically approximate the integral $V_{pq,\hat{\theta}_{pq}}(\zeta)$ as defined in \eqref{eq:intdiffK}, using the limits of integration $b_1=0$, $b_2=0.25$. Approximation of $V_{pq,\hat{\theta}_{pq}}(\zeta)$ involves estimating the sector-$K$-function over a discrete, high-resolution set of distances $r$, using an angular bandwidth parameter which we choose to be $h_{\phi} = \pi/8$ following \citet[][\S 4.3.1]{Rajala18a}, and details of the chosen sector-$K$-function estimator are given in the Appendix. 
In estimating the anisotropy parameters, our choice of interval for $r$ is deliberately large relative to the true scale of dependence in all of our models, as we intend to show that reasonable results can be obtained without prior knowledge of the true scale of dependence in the data.

When estimating the Mat\'ern parameters, despite using a favourable form of the Mat\'ern parameterisation as discussed in Section \ref{subsec:MVMaterncorr}, 
there proved to be insufficient separation of the effects of $\nu_{pq}$ and $\alpha_{pq}$ in practice for both parameters to be allowed to vary freely during estimation. In order to avoid this issue, a common strategy \citep[e.g.][]{Diggle13b} is to restrict $\hat{\nu}_{pq}$ to three candidate values, representing three sufficiently distinct levels of smoothness in the resulting random fields: we seek $\hat{\nu}_{pq}\in\{0.05,0.5,5.0\}$, $p,q=1,2$. For the case $p\ne q$, this candidate vector was further restricted, to ensure that $\hat{\nu}_{12}$ satisfied Condition \ref{cond:nu}. The remaining Mat\'ern parameters were allowed to vary on continuous bounded intervals: $\hat{\alpha}_{pq}\in(0,\alpha^{UB}_{pq})$ and $\hat{\sigma}_{pq}\in(0,\sigma^{UB}_{pq})$. In the marginal cases, $\alpha^{UB}_{pp}=10$ and $\sigma^{UB}_{pp}=50$, $p=1,2$, were chosen such that these constituted generous intervals around the corresponding true values. For estimating the cross-covariance parameters, $\alpha^{UB}_{12}$ and $\sigma^{UB}_{12}$ were chosen to ensure compliance with Conditions \ref{cond:alpha} and \ref{cond:sigma}.

Our implementation was carried out in Matlab, where we used the default interior-point algorithm to carry out constrained maximisation of the Palm-log likelihood with respect to $(\alpha_{pq},\sigma_{pq})$, for each candidate value of $\nu_{pq}$. Since this algorithm requires the user to initialise the parameters being sought, we did so using a computationally inexpensive version of the widely-used minimum contrast method, minimising the difference between the estimated (isotropic) pair correlation function and its closed-form expression across a coarse grid of parameter pairs $(\alpha_{pq},\sigma_{pq})$. 


We also detail our choice of the MPLE tuning parameter $R$. Following the guidance of \citet{Prokesova13}, we chose $R$ to be approximately equal to the range of interaction in the relevant dataset. The practical range of dependence is defined in the geostatistics literature to be the distance at which the spatial auto- or cross-correlation decays to 0.05. We calculated the practical range for each of our models, motivating our choice of $R=0.1$ for models 1 and 2, and $R=0.25$ for  models 3 and 4. 
In a small proportion of runs, the MPLE procedure returned seemingly degenerate estimates of either $\hat{\alpha}_{pq}$ or $\hat{\sigma}_{pq}$, $p=1,2$, with one or the other being returned equal to their upper bound. This was found to occur when the majority of the points in the corresponding dataset lay in the boundary region created using the above values of $R$. In this scenario, the number of points contributing to the Palm log-likelihoods \eqref{eq:bivpalmloglik}-\eqref{eq:palmloglik} is reduced, leading to a loss of accuracy in the MPLE procedure. We therefore counter this phenomenon by decreasing $R$ when necessary. When the initial attempt returns estimates of any of the Mat\'ern scale or power parameters greater than 95\% of their corresponding upper bound, we iteratively repeat the MPLE procedure, reducing $R$ by 0.01 each time, until all scale and power estimates are below this 95\% threshold. We found this to be an adequate, if somewhat ad-hoc remedy to the problem. After applying our iterative fix, for each of the four models considered, fewer than 8 of the 500 Monte Carlo runs returned any Mat\'ern scale or power estimates greater than $50\%$ of their corresponding upper bound.


In Table \ref{table:MCsims_fullMatern}, we provide summary statistics for the Monte Carlo sampling distributions of the parameters in Models 1-4. For the smoothness parameters, we report the modal estimate from our Monte Carlo simulations, as we consider only three potential values for these parameters. For the estimated scales of anisotropy, we provide the median of the Monte Carlo samples, along with the sample standard deviation, since their sampling distributions display evidence of skewness. For the estimated angles of anisotropy, as well as the Mat\'ern scale and power parameters, we provide the MC sample mean and the MC sample standard deviation. 
The sampling distributions of the parameter estimates for Model 1 are depicted in Figure \ref{fig:MCsims_Model1}, and the corresponding figures for Models 2-4 are provided in the Appendix.

\begin{table}
	\begin{center}
		\begin{tabular}{|l|ccc|ccc||cc|}
			\hline
			&	$\theta_{11}$	&	$\theta_{22}$ 	&	$\theta_{12}$ 
			&	$\zeta_{11}$	&	$\zeta_{22}$ 	&	$\zeta_{12}$ 
			&	$\mu_{1}$		&	$\mu_2$	\\
			\hline
			Dataset 1 
			&	$36^{\circ}$	& 	$72^{\circ}$	&	 $54^{\circ}$ 	
			&		0.20		&		0.20		&		0.35	
			&		4.75		&		4.5			\\
			MC estimate
			&	$40.36^{\circ}$		&	$73.87^{\circ}$		&	$67.07^{\circ}$
			&	0.25	&	0.24	&	0.41
			&	3.14	&	3.06	\\
			MC std.~dev.
			&	$21.72^{\circ}$		&	$16.19^{\circ}$		&	$37.49^{\circ}$
			&	0.33	&	0.27	&	0.48
			&	1.62	&	2.31	\\
			\hline
			
			Dataset 2 
			&	$36^{\circ}$	& 	$72^{\circ}$	&	 $54^{\circ}$ 	
			&		0.40		&		0.40		&		0.60	
			&		4.75		&		4.5			\\
			MC estimate
			&	$42.51^{\circ}$		&	$75.38^{\circ}$		&	$70.90^{\circ}$
			&	0.41	&	0.41	&	0.58
			&	3.65	&	3.47	\\
			MC std.~dev.
			&	$26.54^{\circ}$		&	$22.22^{\circ}$		&	$43.54^{\circ}$
			&	0.35	&	0.28	&	0.44
			&	1.91	&	2.82	\\
			\hline
			
			Dataset 3 
			&	$36^{\circ}$	& 	$72^{\circ}$	&	 $54^{\circ}$ 	
			&		0.20		&		0.20		&		0.35	
			&		5.75		&		5.625		\\
			MC estimate
			& $32.49^{\circ}$	& $69.87^{\circ}$	&	$52.07^{\circ}$
			&	0.22	&	0.23	&	0.34
			&	2.95	&	2.81	\\
			MC std.~dev.
			& $14.49^{\circ}$	& $8.53^{\circ}$	& $27.22^{\circ}$
			&	0.08	&	0.12	&	0.29
			&	1.40	&	2.27	\\
			\hline
			
			Dataset 4 
			& $36^{\circ}$		&  $72^{\circ}$		& $54^{\circ}$ 	
			&		0.40		&		0.40		&		0.60	
			&		5.75		&		5.625		\\
			MC estimate
			&$39.73^{\circ}$	& $74.24^{\circ}$	& $64.94^{\circ}$
			&	0.39	&	0.40	&	0.52
			&	4.16	&	4.28	\\
			MC std.~dev.
			& $21.98^{\circ}$	& $18.31^{\circ}$	& $37.58^{\circ}$
			&	0.19	&	0.23	&	0.25
			&	1.72	&	1.36	\\
			\hline
		\end{tabular}

		\begin{tabular}{|l|ccc|ccc|ccc|}
			\hline
			&	$\alpha_{11}$	&	$\alpha_{22}$ 	&	$\alpha_{12}$ 	&	$\nu_{11}$		&	$\nu_{22}$ 		&	$\nu_{12}$ 		&	$\sigma_{11}$	&	$\sigma_{22}$ 	&	$\sigma_{12}$ 	\\
			\hline
			
			Dataset 1
			&		0.045	&		0.065	&		0.050
			&		0.5		&		0.5		&		0.5		
			&		4.00	&		4.50	&		1.97		\\
			MC estimate
			&	0.042	&	0.116	&	0.047
			&	0.5		&	0.5		&	0.5
			&	4.52	&	4.13	&	1.28		\\
			MC std.~dev.
			&	0.033	&	0.066	&	0.021
			&	-		&	-		&	-
			&	2.93	&	2.11	&	1.00	\\
			\hline
			
			Dataset 2
			&		0.045	&		0.065	&		0.050
			&		0.5		&		0.5		&		0.5		
			&		4.00	&		4.50	&		2.30		\\
			MC estimate
			&		0.084	&		0.138	&		0.049
			&		0.5		&		0.5		&		0.5		
			&		4.19	&		4.14	&		1.37		\\
			MC std.~dev.
			&		0.586	&		0.689	&		0.029
			&		-		&		-		&		-		
			&		2.47	&		2.50	&		0.98	\\
			\hline
			
			Dataset 3
			&		0.090	&		0.120	&		0.100 
			&		0.5		&		0.5		&		0.5		
			&		2.00	&		2.25	&		0.98		\\
			MC estimate
			&		0.147	&		0.174	&		0.110
			&		0.5		&		0.5		&		5.0		
			&		2.91	&		2.98	&		0.58	\\
			MC std.~dev.
			&		0.593	&		0.497	&		0.067
			&		-		&		-		&		-		
			&		2.36	&		2.40	&		0.93	\\
			\hline
			
			Dataset 4
			&		0.090	&		0.120	&		0.100 
			&		0.5		&		0.5		&		0.5		
			&		2.00	&		2.25	&		1.15		\\
			MC estimate
			&		0.191	&		0.181	&		0.111
			&		0.5		&		0.5		&		5.0		
			&		2.85	&		2.52	&		0.72	\\
			MC std.~dev.
			&		0.763	&		0.487	&		0.075
			&		-		&		-		&		-		
			&		2.19	&		1.77	&		1.02	\\
			\hline
		\end{tabular}
	\end{center}
	\caption{Monte Carlo estimates and standard errors for the anisotropy (top) and Mat\'ern (bottom) parameters in four distinct models. All estimates and errors are given to 2dp, apart from those for the scale parameters; these are presented to 3dp, due to the magnitude of the errors.
	}
	\label{table:MCsims_fullMatern}
\end{table}

\begin{figure}
	\includegraphics[width=\textwidth]{./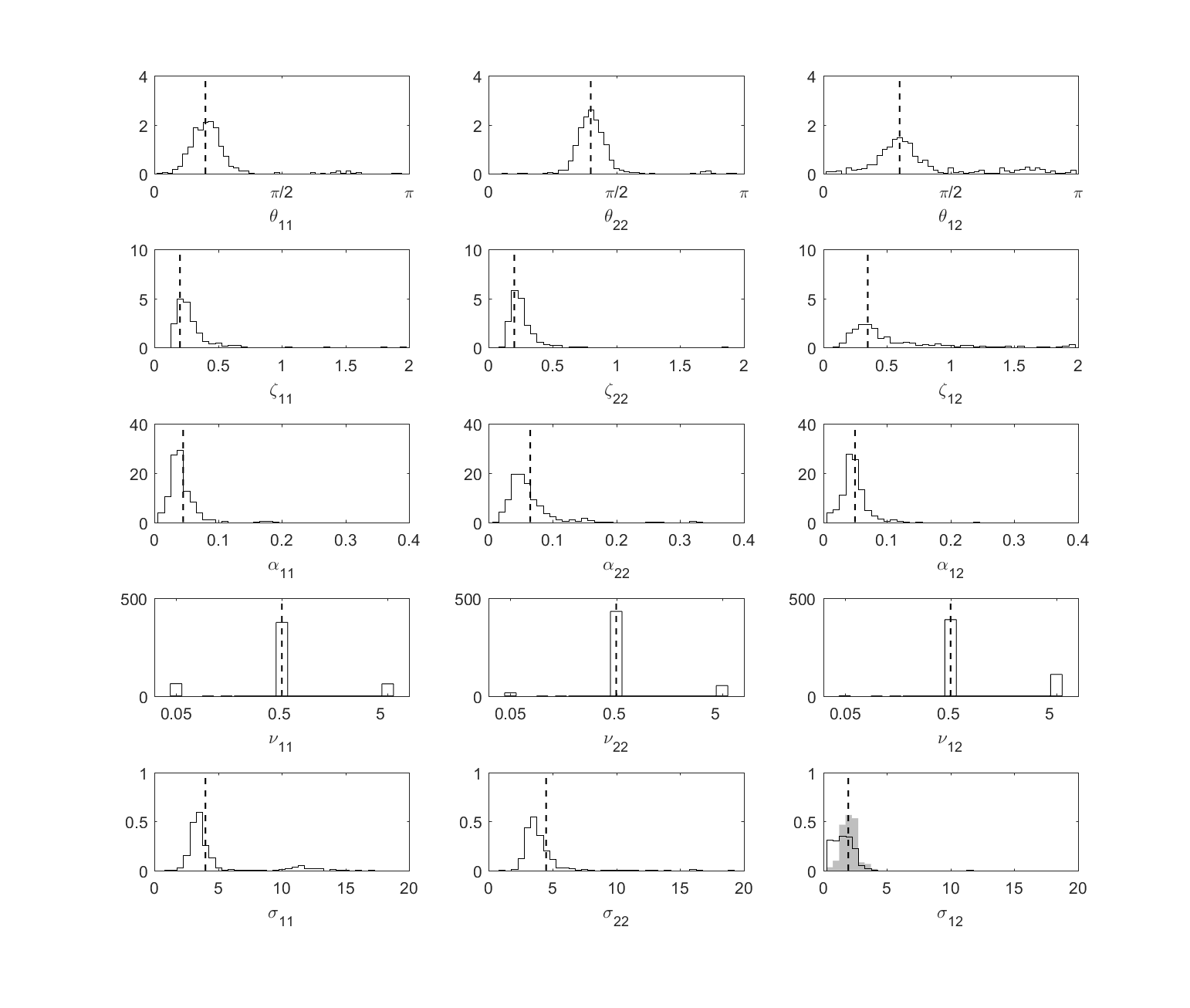}
	\vspace{-40pt}
	\caption{Histograms of the parameter distributions for the synthetic bivariate geometric anisotropic LGCP with Mat\'ern covariance structure specified by Model 1. The parameter values used to generate each dataset are marked by vertical dashed lines.}
	\label{fig:MCsims_Model1}
\end{figure}

From these results, we can identify some general conclusions regarding the performance of our model fitting procedure. Firstly, for all datasets, the estimated anisotropy parameters are in reasonable agreement with their corresponding true values. There is room for improvement in accuracy, especially in the estimated values of $\hat{\theta}_{pq}$, $p,q=1,2$; as noted above, this can be achieved through reducing the range of distances, $r$, over which we seek to characterise anisotropy. The broad accuracy of these estimates, however, suggests that our bivariate generalisation of \citeauthor{Rajala16}'s method of estimating anisotropy has been successful. 
 
Similarly, we can see that the Mat\'ern scale parameters, $\alpha_{pq}$, have been satisfactorily estimated for all four models. For models 1 and 2, we have also recovered the correct values of the Mat\'ern smoothness parameters $\nu_{pq}$. For models 3 and 4, however, there are some notable inaccuracies in estimating $\nu_{12}$. We attribute this to the lower power in the between-process dependence structures for models 3 and 4. For all four models, the power parameter estimates $\sigma_{pq}$, $p,q=1,2$ show reasonable accuracy, though there is consistent underestimation of the joint dependence power parameter. Further examination of the empirical distributions of $\hat{\sigma}_{12}$ suggests that this can be attributed to our use of constrained optimisation of the bivariate Palm log-likelihood.
In the final panel of Figures \ref{fig:MCsims_Model1} and \ref{fig:MCsims_Model2}-\ref{fig:MCsims_Model4}, we have overlain the empirical parameter distribution for $\hat{\sigma}_{12}$, restricted to those MC simulations where $\hat{\sigma}_{12}$ was not equal to the upper bound dictated by $\hat{\sigma}_{11}$ and $\hat{\sigma}_{22}$. This suggests that our use of constrained optimisation limits the accuracy of the estimated power parameter; this is the cost of ensuring that each fitted parameter vector specifies a valid multivariate dependence structure.

Finally we note that, across all models, the estimation of $\mu_1$ and $\mu_2$ is poor. We found that this can be improved by reducing the MPLE tuning parameter $R$, but with a loss of accuracy in the resulting covariance parameters. In practice, we can of course avoid this trade-off by instead using the classical estimator for the intensity, $\hat{\lambda}_p=n_p/|W|$, and combining this with $\hat{\sigma}_{pp}$ to obtain a more accurate estimate for $\mu_{p}$. 

Overall, these results indicate reasonable success for our model fitting procedure, and warrant its use in exploring the model's effectiveness in characterising real data. 

\subsection{Application to ecological data}
\label{subsec:BCIimplementation}

In order to demonstrate the utility of our multivariate geometric anisotropic framework, we fit our multivariate Mat\'ern geometric anisotropic LGCP to a bivariate point pattern from a 50ha plot in the BCI forest stand in Panama. Our point pattern of interest comprises two tree species, \textit{Cecropia obtusifolia} and \textit{Spondias radlkoferi}. To ease comparison with the studies in the previous section, we rescale the coordinates to the half-unit window $[0,1]\times[0,0.5]$; this rescaled bivariate point pattern is displayed in Figure \ref{fig:BCIdata}. \textit{C. obtusifolia} and \textit{S. radlkoferi} were chosen as 
a preliminary study of the data revealed empirical evidence of between-process anisotropy at a range of $r=50m$. This is demonstrated in Figure \ref{fig:MCsecKfun}, which we describe below. This preliminary evidence also motivates the scales over which we seek to characterise anisotropy in the data: for both the marginal and cross-dependence relationships, we estimate $\theta_{pq}$ using Fry processes consisting of only those point pairs separated by $r\in(0,0.05)$, and we estimate $\zeta_{pq}$ using $b_1=0$ and $b_2=0.05$ as the limits of integration in $\hat{V}_{pq,\hat{\theta}_{pq}}(\zeta)$.

As in Section \ref{subsec:proofofconcept}, we must also choose a value of the MPLE tuning parameter $R$. Once again, we do so by consulting the marginal and cross-pair correlation functions for the isotropised data, once the marginal and between-process anisotropy parameters have been estimated. We found that the corresponding implied practical range, both within-species and between-species, was in the interval $[0.1,0.2]$. We therefore executed the MPLE portion of our model fitting procedure for each of $R=0.1,0.15,0.2$.

\begin{figure}
	\includegraphics[width=\textwidth]{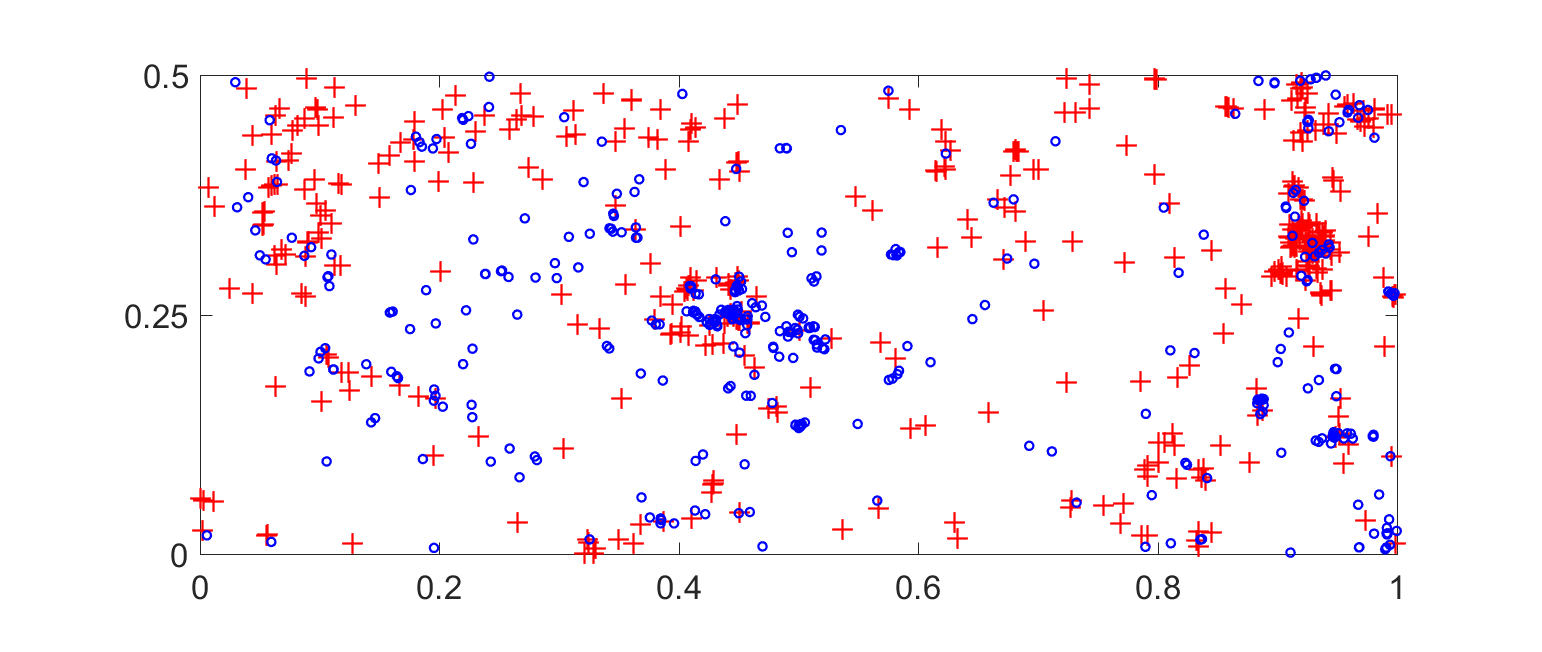}
	\caption{Rescaled point pattern data from the 50ha tropical rainforest census plot on Barro Colorado Island. Two species are shown: \textit{Cecropia obtusifolia} (blue circles) and \textit{Spondias radlkoferi} (red crosses).}
	\label{fig:BCIdata}
\end{figure}

For the proof-of-concept studies in Section \ref{subsec:proofofconcept}, we were able to avoid constraining the anisotropy parameters during the estimation procedure, as we knew that their true values satisfied the relevant model validity conditions of Section \ref{subsec:MVMaterncorr}. When fitting the model to observed data, however, we have no such assurance. Instead of introducing any new constraints on the anisotropy parameters here, we acknowledge this uncertainty by checking each fitted model against Conditions 1-4; all of the fitted models we present here were found to satisfy these validity conditions.
In Section \ref{subsec:proofofconcept}, we also found that estimating the Mat\'ern parameters via constrained optimisation can result in underestimation of the overall power in the between-process covariance. This occurs when the estimated value of $\sigma_{12}$ is equal to the upper bound specified by the marginal dependence structures. By calculating this upper bound explicitly, and comparing with $\hat{\sigma}_{12}$, we can therefore ascertain whether each fitted model accurately represents the between-species dependence structure; this is important, as it describes the interspecific interaction between individual trees in our dataset.

To begin with, we applied our model-fitting procedure as in Section \ref{subsec:proofofconcept}. This resulted in the model specified by the first three rows of parameter estimates in Table \ref{table:BCIfittedmodels}. Since $\hat{\sigma}_{12}=\sigma^{UB}_{12}$ in each of these specifications, we conclude that none of these fitted models accurately represent the interspecific interaction in our dataset. 
Motivated by the observation that distinct values of the marginal smoothness parameters lead to prohibitively small values of $\sigma^{UB}_{12}$, we next proceeded to fix the smoothness parameters, $\nu_{11}=\nu_{22}=\nu_{12}=0.5$, such that we sought to fit a geometric anisotropic bivariate Exponential covariance structure to our data. Crucially, all of the discussion from Sections \ref{sec:model} and \ref{sec:modelfitting} is valid for fixed values of $\nu_{pq}$, $p,q=1,2$. The resulting parameter estimates for this model are given in rows 4-6 of Table \ref{table:BCIfittedmodels}. 
In order to demonstrate the utility of our multivariate anisotropic framework, we also fit an isotropic version of the multivariate Exponential LGCP to the same data, for the purpose of comparison. In practice, we achieve this by fixing $\zeta_{pq}=1$ and $\theta_{pq}=0$ for $p,q=1,2$, and implementing the MPLE portion of the model fitting procedure as described above, using $R=0.1,0.15,0.2$. The resulting three sets of estimated scale and power parameters for this model are given in the bottom three rows of Table \ref{table:BCIfittedmodels}. As is shown in this table, the interspecific interaction is well-represented in only one of each of the anisotropic and isotropic fitted Exponential models. We henceforth restrict our attention to these two fitted models. 

\begin{table}[t!]
	\begin{center}
		\begin{tabular}{|ccc|ccc||cc|}
			\hline
			$\hat{\theta}_{11}$	&	$\hat{\theta}_{22}$ & $\hat{\theta}_{12}$ 
			&	$\hat{\zeta}_{11}$	&	$\hat{\zeta}_{22}$ 	& $\hat{\zeta}_{12}$ 
			&	$\hat{\lambda}_{1}$	&	$\hat{\lambda}_{2}$ 	\\ 
			\hline
			$158.89^{\circ}$	& 	$87.65^{\circ}$	&	 $127.39^{\circ}$
			&	0.51		&	0.39	&	0.53
			&	824		&	878		\\
			
			\hline
		\end{tabular}

		\begin{tabular}{|l|c|ccc|ccc|ccc|}
			\hline
			Covariance model 
			&	$R$
			&	$\hat{\alpha}_{11}$	&	$\hat{\alpha}_{22}$ 	& $\hat{\alpha}_{12}$ 
			&	$\hat{\nu}_{11}$	&	$\hat{\nu}_{22}$ 		& $\hat{\nu}_{12}$ 
			&	$\hat{\sigma}_{11}$	&	$\hat{\sigma}_{22}$ 	& $\hat{\sigma}_{12}$ \\
			\hline
			Anisotropic
			& 0.1
			&	0.03	&	0.71	&	0.15
			&	0.05	&	0.5		&	0.5
			&	11.55	&	13.66	&	1.16$^*$	\\
			Mat\'ern
			& 0.15
    		&	10.00	&	0.08	&	0.11
    		&	0.05	&	5.0		&	5.0
    		&	19.05	&	3.09	&	1.01e-07$^*$	\\
   			& 0.2
    		&	0.07	&	0.10	&	0.14
    		&	0.05	&	5.0		&	5.0
    		&	12.50	&	2.00	&	0.03$^*$	\\
			\hline
			Anisotropic
			& 0.1
			&	0.03	&	0.71	&	0.04
			&	-		&	-		&	-
			&	3.09	&	13.66	&	1.61$^*$	\\
			Exponential
			&\bf{0.15}
			&\bf{0.10}	&\bf{0.18}	&\bf{0.12}
			&	-		&	-		&	-
			&\bf{3.47}	&\bf{5.22}	&\bf{2.45}	
													\\
			& 0.2
			&	0.07	&	0.17	&	0.08
			&	-		&	-		&	-
			&	3.30	&	3.03	&	1.96$^*$	\\

			\hline
			Isotropic
			& 0.1
			&	0.03	&	0.75	&	0.04
			&	-		&	-		&	-	
			&	3.32	&	8.29	&	1.44$^*$	\\
			Exponential
			& 0.15
			&	0.08	&	0.13	&	0.08
			&	-		&	-		&	-		
			&	3.34	&	2.79	&	2.33$^*$	\\
			
			&\bf{0.2}
			&\bf{0.12}	&\bf{0.09}	&\bf{0.10}		
			&	-		&	-		&	-	
			&\bf{4.28}	&\bf{3.40}	&\bf{3.50} 
												\\
			\hline
		\end{tabular}
	\end{center}
	\caption{Parameter estimates for three bivariate LGCPs, fitted to the tropical rainforest data described in the text. The anisotropy parameters in the top table apply to both anisotropic models described in the bottom table. Those values of $\hat{\sigma}_{12}$ marked with an asterisk ($^*$) are equal to the corresponding upper bound $\sigma^{UB}_{12}$. The two models that we choose to assess using global envelope tests, are highlighted in bold.}
	\label{table:BCIfittedmodels}
\end{table}

In order to assess different aspects of each model's performance, we use global envelope tests (GETs), in which we compare second-order statistics for the observed data with those of $M$ bivariate point patterns, independently simulated from the fitted model. Such tests were developed by \citet{Myllymaki17} to address multiple testing concerns with regards to the popular use of Monte Carlo envelope tests \citep{Loosemore06,Baddeley14}. The envelopes provided by the GETs describe a proper statistical test: if the observed test statistic lies outside the simulated envelope \textit{at any instance}, then the null hypothesis that the observed data belong to the fitted model may be rejected. In order to construct the envelopes, we use one of the following two approaches, both of which are described in detail by \citet{Myllymaki17}. For symmetric second-order statistics, we use the scaled studentized maximum absolute difference (MAD) to construct the critical bounds. For asymmetric second-order statistics, it is more appropriate to construct the envelopes using the scaled directional quantile MAD. In both cases, we configure our tests such that they have a global type I error probability of 0.1, using $M=499$.

To assess each model's description of bivariate anisotropy in the data, we estimate the the sector-$K$-function \eqref{eq:sectorKfun} at a distance $r=0.05$, and at the angles $\phi=k\pi/60$, $k=0,\ldots,60$. This assessment is summarised in Figure \ref{fig:MCsecKfun}, which gives the estimated marginal and between-process sector-$K$-functions for the observed BCI data, along with their corresponding directional quantile MAD envelopes.
To assess the fitted model's ability to replicate aggregation in the observed multi-type point pattern, we use the `$p$-to-$q$' nearest-neighbour distance distribution function $G_{pq}(r)$ \citep{VanLieshout99}. This describes the empirical distribution of the absolute distance between the typical point of type $p$ and its nearest point of type $q$ and is, in general, asymmetric in $p,q$. 
We calculate $G_{pq}(r)$ at a range of distances, using a bivariate version of the border-corrected estimators detailed by \citet[][\S 8.11.3]{Baddeley15}. The resulting estimates for the observed BCI data are provided in Figure \ref{fig:MCnnddfun}
, along with their corresponding studentized MAD envelopes.

From Figure \ref{fig:MCsecKfun}, we can see that the two chosen species in the BCI forest stand exhibit anisotropic interspecific interaction at a range of $50m$: for $\phi\in[7\pi/60,9\pi/60]\cup[12\pi/60,18\pi/60]$, the estimated sector-$K$-function $K_{12}^a(0.05,\phi)$ lies outside of the envelope generated by a multivariate isotropic LGCP. The $p$-value for the global envelope test that was carried out for this statistic was $0.058$, indicating departure from the isotropic model when using a global type I error probability of 0.1. From the bottom row of panels, we can see that our multivariate geometric anisotropic LGCP can comfortably replicate this observed heterogeneity.  

Finally, the bottom row of panels in Figure \ref{fig:MCnnddfun} demonstrates that our fitted multivariate anisotropic LGCP also accurately captures the clustering behaviour exhibited by the observed bivariate data. The top panels in this Figure suggest that, despite not being able to account for the anisotropy in the data, the multivariate isotropic LGCP has also captured the clustering behaviour evident in this particular example.

\begin{figure}[t!]
\includegraphics[width=\textwidth]{./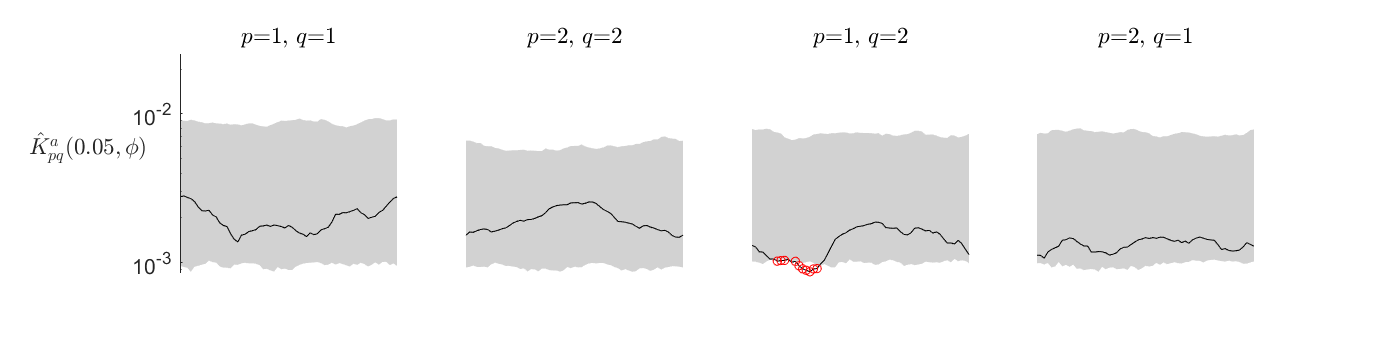}
\vspace{-30pt}
\\
\includegraphics[width=\textwidth]{./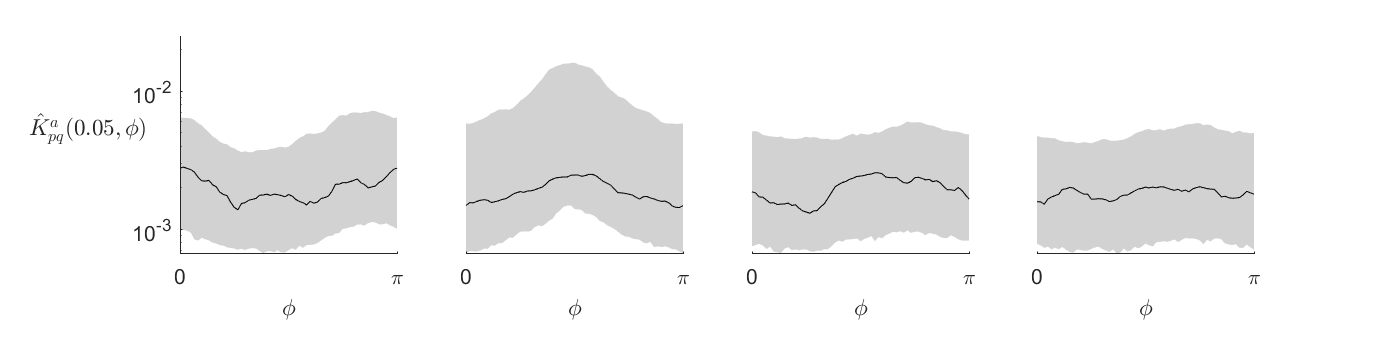}
\caption{Estimates of the sector-$K$-function \eqref{eq:sectorKfun}, at a fixed range of $50m$, for the observed bivariate BCI point pattern (black line), along with the corresponding 90\% directional-quantile MAD envelopes obtained from a fitted multivariate geometric anisotropic LGCP (bottom row) and from a fitted multivariate isotropic LGCP (top row). Departure of the data from the adopted model is highlighted with red circles. The vertical axes are presented on a log-scale to highlight the departure of the data from the isotropic model in the third panel.
} 
\label{fig:MCsecKfun}
\end{figure}
\begin{figure}[t!]
	\includegraphics[width=\textwidth]{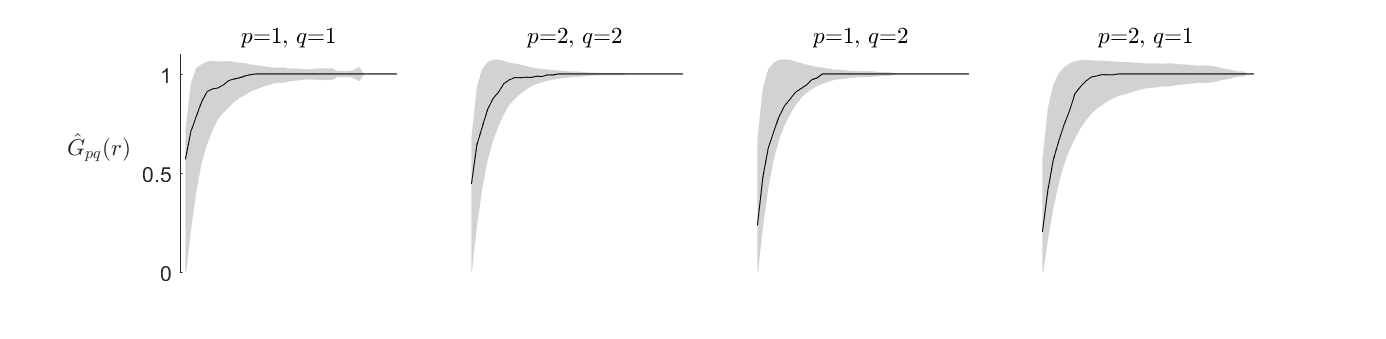}
	\vspace{-30pt}
	\\
	\includegraphics[width=\textwidth]{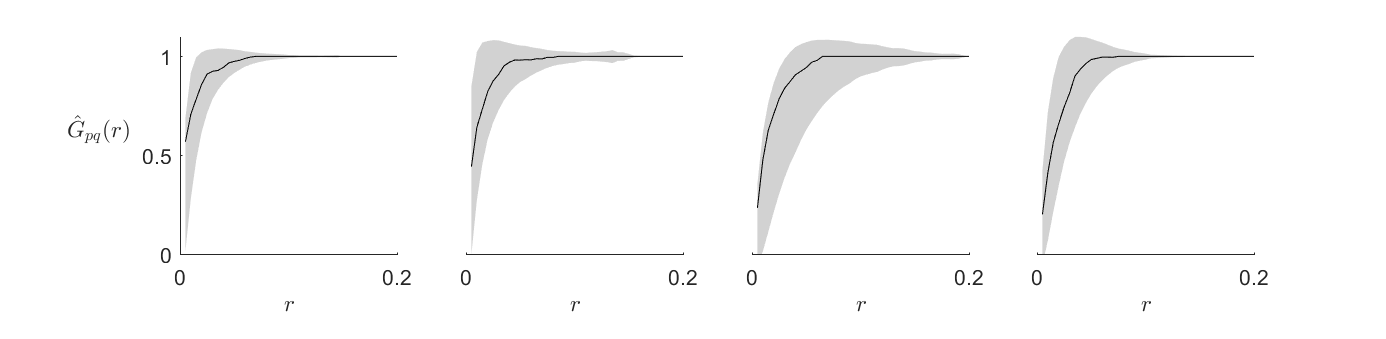}
	\caption{Estimated nearest-neighbour distance distribution functions for the observed bivariate BCI point pattern (black line), along with the corresponding 90\% studentized MAD envelopes obtained from a fitted multivariate geometric anisotropic LGCP (bottom row) and from a fitted multivariate isotropic LGCP (top row).} 
	\label{fig:MCnnddfun}
\end{figure}

\section{Discussion}


Using the model-fitting methodology described in Section \ref{sec:modelfitting}, we have shown that by incorporating geometric anisotropy into the between-process dependence, as well as the marginal dependence, we can construct a LGCP that more accurately replicates any rotationally heterogeneous interaction between points in a multi-type point pattern. 
We have focussed here on a covariate-free approach, motivated in part by the desire to allow the description of anisotropic between-process dependence in data for which there are no explanatory spatial variables. Nevertheless, the models presented here are flexible enough to use (potentially incomplete) covariate information where it is available. 
Indeed, an interesting first extension of this work would be to incorporate covariates into the first-order description of the GRF underlying our LGCPs; for instance, the expected value of the GRF could be specified through a linear regression model, and inference with respect to the regression parameters may be achievable through the use of estimating functions \citep[e.g.][]{Waagepetersen08,Waagepetersen09}. Such an approach would allow the user to exploit any knowledge of spatial covariates whilst being confident that any residual heterogeneity in the data would be accounted for by the increased flexibility of the multivariate geometric anisotropic second-order dependence structure.

\section{Acknowledgements}
The work of J. S. Martin, D. J. Murrell and S. C. Olhede was supported by the UK Engineering and Physical Sciences Research Council via EP/N007336/1, and EP/L001519/1. S. C. Olhede also acknowledges support from the 7th European Community Framework Programme via a Grant CoG 2015- 682172NETS (Olhede). 

The BCI forest dynamics research project was founded by S. P. Hubbell and R. B. Foster and is now managed by R. Condit, S. Lao, and R. Perez under the Center for Tropical Forest Science and the Smithsonian Tropical Research in Panama. Numerous organizations have provided funding, principally the U.S. National Science Foundation, and hundreds of field workers have contributed.

\newpage
\appendix

\section{Appendix}

\subsection{Proof of Proposition \ref{prop:mvgaMatern}}
\label{appx:proofprop1}
In Proposition \ref{prop:mvgaMatern}, we state that Conditions 1-4 are sufficient for the geometric anisotropic Mat\'ern function in \eqref{eq:mvgaMatern} to specify a valid multivariate covariance model, and we sketch the proof here. This proof is similar to that of Theorem 1 of \citet{Apanasovich12}, with additional consideration required to account for geometric anisotropy. As such, our proof depends on the following lemmas, due to \citet{Apanasovich12}, proofs for which can be found in that paper.
\begin{lemma}{ \citep{Apanasovich12}. }
Let $0<b_p<\infty$, $p=1,\ldots,P$, $\delta\ge0$, and $B_{pq}>0$, $p,q=1,\ldots,P$, be such that the matrix $(-B_{pq})_{p,q=1}^P$ is conditionally nonnegative definite. Then the $P\times P$ matrix with entries
$$
\frac{\Gamma(b_p + b_q + \delta)}{B_{pq}^{b_p+b_q+\delta}} \qquad p, q = 1,\ldots P,
$$
is nonnegative definite.
\end{lemma}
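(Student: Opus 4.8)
The plan is to represent the scalar ratio $\Gamma(b_p+b_q+\delta)/B_{pq}^{\,b_p+b_q+\delta}$ as an integral and then show that the matrix-valued integrand is nonnegative definite at each point of integration. The natural tool is the Gamma integral
\begin{equation*}
\frac{\Gamma(s)}{B^{s}} = \int_0^\infty t^{s-1} e^{-Bt}\,dt,
\end{equation*}
valid for $s>0$ and $B>0$. Setting $s=b_p+b_q+\delta$ and $B=B_{pq}$ (both admissible, since $b_p,b_q>0$ and $\delta\ge0$ give $s>0$, while $B_{pq}>0$ secures convergence at $t\to\infty$), the $(p,q)$ entry of the target matrix becomes $\int_0^\infty t^{\,b_p+b_q+\delta-1} e^{-B_{pq}t}\,dt$. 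It therefore suffices to prove that the integrand $M(t)=\big(t^{\,b_p+b_q+\delta-1} e^{-B_{pq}t}\big)_{p,q=1}^P$ is nonnegative definite for every $t>0$, since nonnegative definiteness is preserved under entrywise integration: for any $c\in\mathbb{R}^P$ one has $c^T\big(\int_0^\infty M(t)\,dt\big)c = \int_0^\infty c^T M(t)c\,dt \ge 0$.

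Next I would factor the integrand. Writing $t^{\,b_p+b_q+\delta-1} = t^{\delta-1}\,t^{b_p}t^{b_q}$, the scalar $t^{\delta-1}>0$ may be pulled out, and $M(t)$ is then, up to this positive factor, the Hadamard product of two $P\times P$ matrices: the rank-one matrix $vv^T$ with $v=(t^{b_1},\ldots,t^{b_P})^T$, whose $(p,q)$ entry is $t^{b_p}t^{b_q}$, and the matrix $E(t)=\big(e^{-B_{pq}t}\big)_{p,q=1}^P$. Since $vv^T$ is manifestly nonnegative definite, the Schur product theorem shows that $vv^T\circ E(t)$ is nonnegative definite as soon as $E(t)$ is nonnegative definite.

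The crux is therefore to establish that $E(t)=\big(e^{-B_{pq}t}\big)$ is nonnegative definite for each $t>0$, and this is where the hypothesis on $(-B_{pq})$ enters. By assumption $(-B_{pq})_{p,q}$ is conditionally nonnegative definite, which is equivalent to saying that $(B_{pq})_{p,q}$ is conditionally negative definite, i.e.\ $\sum_{p,q}c_pc_q B_{pq}\le0$ whenever $\sum_p c_p=0$. Schoenberg's theorem on conditionally negative definite kernels then yields precisely that $\big(e^{-tB_{pq}}\big)$ is nonnegative definite for every $t\ge0$. I expect this to be the main obstacle: the passage from conditional negative definiteness to genuine nonnegative definiteness of the entrywise exponential is the one step that requires a classical structural theorem rather than an elementary manipulation, and care is needed to invoke the correct (matrix) form of Schoenberg's result.

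Combining the pieces completes the argument. For each $t>0$ the integrand $M(t)$ is a positive scalar times the Hadamard product of two nonnegative definite matrices, hence nonnegative definite; integrating over $t\in(0,\infty)$ preserves this property, so the matrix with entries $\Gamma(b_p+b_q+\delta)/B_{pq}^{\,b_p+b_q+\delta}$ is nonnegative definite, as claimed.
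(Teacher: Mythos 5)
Your proof is correct. The paper itself does not prove this lemma --- it defers to Apanasovich et al.\ (2012) --- and your argument (the Gamma integral representation $\Gamma(s)/B^{s}=\int_0^\infty t^{s-1}e^{-Bt}\,dt$, the factorisation of the integrand into a positive scalar times the Schur product of the rank-one matrix $(t^{b_p}t^{b_q})$ with $(e^{-B_{pq}t})$, Schoenberg's theorem to get nonnegative definiteness of the entrywise exponential from conditional negative definiteness of $(B_{pq})$, and closure of nonnegative definiteness under entrywise integration) is essentially the proof given in that cited source.
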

\begin{lemma}{ \citep{Apanasovich12}. }
Let $\delta\ge0$ and $B_{pq}$, $p,q=1,\ldots,P$ be as in Lemma 1. Then the matrix with $(p,q)^{\textrm{th}}$ entry
$$
\frac{1}{(B_{pq} + \delta )^r} \qquad p, q = 1,\ldots,P,
$$
is nonnegative definite, for any $0<r<\infty$.
\end{lemma}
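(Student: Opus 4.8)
The plan is to combine the complete monotonicity of the power function $t\mapsto t^{-r}$ with a Schoenberg-type passage from conditional to genuine nonnegative definiteness. The entry point is the gamma integral, valid for every $t>0$ and $r>0$,
\[
t^{-r} = \frac{1}{\Gamma(r)}\int_0^\infty s^{r-1}e^{-st}\,ds .
\]
Since $B_{pq}>0$ and $\delta\ge0$ give $B_{pq}+\delta>0$, I would set $t=B_{pq}+\delta$ and rewrite each entry of the target matrix as
\[
\frac{1}{(B_{pq}+\delta)^r} = \frac{1}{\Gamma(r)}\int_0^\infty s^{r-1}e^{-s\delta}\,e^{-sB_{pq}}\,ds ,
\]
thereby expressing $\left((B_{pq}+\delta)^{-r}\right)_{p,q}$ as an integral, against the nonnegative weight $s^{r-1}e^{-s\delta}/\Gamma(r)$, of the one-parameter family of matrices $\left(e^{-sB_{pq}}\right)_{p,q}$.

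The heart of the argument is to establish that $\left(e^{-sB_{pq}}\right)_{p,q}$ is nonnegative definite for each fixed $s>0$. Here the hypothesis that $(-B_{pq})_{p,q}$ is conditionally nonnegative definite is exactly what is needed: I would invoke the classical fact that a real symmetric matrix $M$ is conditionally nonnegative definite if and only if its Hadamard exponential $\left(e^{tM_{pq}}\right)_{p,q}$ is nonnegative definite for every $t\ge0$, and apply it with $M_{pq}=-B_{pq}$ and $t=s$.

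Granting this, the conclusion follows from a convex-cone argument. For each $s>0$ the matrix $s^{r-1}e^{-s\delta}\left(e^{-sB_{pq}}\right)_{p,q}$ is a nonnegative multiple of a nonnegative definite matrix, and the cone of nonnegative definite $P\times P$ matrices is closed under such integration; explicitly, for any $c\in\mathbb{R}^P$,
\[
c^{T}\!\left(\frac{1}{(B_{pq}+\delta)^r}\right)_{p,q}\!c = \frac{1}{\Gamma(r)}\int_0^\infty s^{r-1}e^{-s\delta}\,\Big[\,c^{T}\!\left(e^{-sB_{pq}}\right)_{p,q}c\,\Big]\,ds \ge 0 ,
\]
since the bracketed quantity is nonnegative for every $s$. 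This gives the nonnegative definiteness asserted in the lemma for all $0<r<\infty$.

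I expect the only real obstacle to be justifying the central step, the implication from conditional nonnegative definiteness of $(-B_{pq})$ to nonnegative definiteness of the Hadamard exponential. If one prefers not to quote Schoenberg's theorem, it can be proved directly: with $M_{pq}=-B_{pq}$, form the centred matrix $\tilde M_{pq}=M_{pq}-M_{pP}-M_{Pq}+M_{PP}$, whose last row and column vanish, and check that $x^{T}\tilde M x = z^{T}Mz$ for the vector $z$ with $z_p=x_p$ $(p<P)$ and $z_P=-\sum_{p<P}x_p$, so that $\sum_p z_p=0$ and hence $x^{T}\tilde M x\ge0$ by conditional nonnegative definiteness; thus $\tilde M$ is nonnegative definite. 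Expanding entrywise, $e^{s\tilde M_{pq}}=\sum_{k\ge0}(s^k/k!)(\tilde M_{pq})^{k}$, so that $\left(e^{s\tilde M_{pq}}\right)_{p,q}=\sum_{k\ge0}(s^k/k!)\,\tilde M^{\circ k}$ is a sum of Hadamard powers of the nonnegative definite matrix $\tilde M$, each nonnegative definite by the Schur product theorem; and since $e^{sM_{pq}}=e^{-sM_{PP}}\,e^{sM_{pP}}\,e^{sM_{Pq}}\,e^{s\tilde M_{pq}}$, a diagonal congruence together with a positive scalar factor recovers the nonnegative definiteness of $\left(e^{sM_{pq}}\right)_{p,q}$. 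The integral representation and the cone-closure step are then routine.
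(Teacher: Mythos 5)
Your proof is correct. Note that the paper itself does not prove this lemma: it states it as a quoted result and defers to \citet{Apanasovich12} for the proof, and your argument --- the gamma-integral representation $(B_{pq}+\delta)^{-r}=\Gamma(r)^{-1}\int_0^\infty s^{r-1}e^{-s\delta}e^{-sB_{pq}}\,ds$ combined with Schoenberg's equivalence between conditional nonnegative definiteness of $(-B_{pq})$ and nonnegative definiteness of the Hadamard exponentials $(e^{-sB_{pq}})$, followed by closure of the cone of nonnegative definite matrices under integration against a nonnegative weight --- is essentially the standard argument used in that reference. Your self-contained justification of the Schoenberg step (centering at the $P$th index, Schur products of Hadamard powers, then a diagonal congruence) is also sound; the only implicit assumption worth flagging is symmetry of $(B_{pq})$, needed to read $e^{sM_{pP}}e^{sM_{Pq}}$ as a diagonal congruence $D(\cdot)D$, which holds in the paper's application since $B_{pq}=4\nu_{pq}/\alpha_{pq}^2$ is symmetric.
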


\begin{proof}[of Proposition \ref{prop:mvgaMatern}]
We operate in the spectral domain: by Cram\'er's generalisation of Bochner's Theorem \citep{Cramer45}, the covariance matrix $\left(C_{pq}(h)\right)_{p,q=1}^{P}$ is nonnegative definite if and only if the corresponding matrix of spectral densities $\left(f_{pq}(\omega)\right)_{p,q=1}^P$ is also nonnegative definite. We therefore consider the form of the multivariate spectral density function, corresponding to \eqref{eq:mvgaMatern}: 
\begin{eqnarray*}
f_{pq}(\omega) & = & \left|\Sigma_{pq}\right|^{1/2} f_{I,pq}\left(\Sigma_{pq}^{1/2}\omega\right) 
\\
	& = & \frac{|\Sigma_{pq}|^{1/2}\sigma_{pq} \Gamma(\nu_{pq}+d/2)}{\pi^{d/2}\Gamma(\nu_{pq})}		\left( \frac{ 4\nu_{pq}}{\alpha_{pq}^2}\right)^{\nu_{pq}} \left( \frac{4\nu_{pq}}{\alpha_{pq}^2} + \|\Sigma_{pq}^{1/2}\omega\|^2 \right)^{-\nu_{pq}-d/2},
\end{eqnarray*}
where each anisotropic deformation matrix $\Sigma_{pq}$ is defined according to \eqref{eq:geoanisotSigmadef} in terms of $\theta_{pq}$ and $\zeta_{pq}$. We can decompose this spectrum as follows, in the process defining four terms numbered I to IV:
\begin{eqnarray}
\label{eq:mvgaMaternspecdecomp}
f_{pq}(\omega) & = 
	  & \overbrace{\frac{\Gamma(\frac{\nu_{pp}+\nu_{qq}}{2}+\frac{d}{2})}{ \left( \frac{4\nu_{pq}}{\alpha_{pq}^2} + \|\Sigma_{pq}^{1/2}\omega\|^2 \right)^{\frac{\nu_{pp}+\nu_{qq}}{2}+\frac{d}{2}} }}^{\textrm{Term I}}
	\times\overbrace{ \left(\frac{\frac{4\nu_{pq}}{ \alpha_{pq}^2} }{\frac{4\nu_{pq}}{\alpha_{pq}^2} + \|\Sigma_{pq}^{1/2}\omega\|^2}\right)^{-\Delta_{\nu} A_{\nu,pq}}}^{\textrm{Term II}}	\nonumber\\
	&&	
	\hspace{-20pt}
	\times \underbrace{\frac{1}{\left(\frac{4\nu_{pq}}{\alpha_{pq}^2} + \|\Sigma_{pq}^{1/2}\omega\|^2\right)^{\Delta_{\nu}}}}_{\textrm{Term III}}
	\times \underbrace{\frac{|\Sigma_{pq}|^{1/2}\sigma_{pq}\Gamma(\nu_{pq}+d/2)}{\pi^{d/2}\Gamma\left(\frac{\nu_{pp}+\nu_{qq}}{2}+\frac{d}{2}\right)\Gamma(\nu_{pq})}\left(\frac{4\nu_{pq}}{\alpha_{pq}^2}\right)^{\Delta_{\nu}+\frac{\nu_{pp}+\nu_{qq}}{2}}}_{\textrm{Term IV}},
\end{eqnarray}
where $A_{\nu,pq} = 1-\left\{\nu_{pq}-\left(\nu_{pp}+\nu_{qq}\right)/2\right\}/\Delta_{\nu}$ is the $(p,q)$-element of a valid nonnegative correlation matrix; nonnegative definiteness of the spectral matrix follows from nonnegative definiteness of the matrices formed from these constituent terms. 

Condition \ref{cond:alpha} is sufficient to guarantee nonnegative definiteness of the matrices with elements given by either the first or third terms in \eqref{eq:mvgaMaternspecdecomp}; this can be seen for the former by applying Lemma 1 and for the latter by applying Lemma 2. 

Conditions \ref{cond:nu}, \ref{cond:alpha} and \ref{cond:Sigma} are sufficient to guarantee nonnegative definiteness of the matrix with elements given by the second term of \eqref{eq:mvgaMaternspecdecomp}. To see this, we first rewrite the second term in \eqref{eq:mvgaMaternspecdecomp} as
\begin{eqnarray*}
\left(\frac{\frac{4\nu_{pq}}{ \alpha_{pq}^2} }{\frac{4\nu_{pq}}{\alpha_{pq}^2} + \|\Sigma_{pq}^{1/2}\omega\|^2}\right)^{-\Delta_{\nu} A_{\nu,pq}} 
& = & 
\exp\left\{\Delta_{\nu} A_{\nu,pq}\left[-\log\left(1-\frac{\|\Sigma_{pq}^{1/2}\omega\|^2}{\frac{4\nu_{pq}}{\alpha_{pq}^2} + \|\Sigma_{pq}^{1/2}\omega\|^2}\right)\right]\right\}\\
& = & 
\prod_{k=1}^{\infty}
\exp\left\{\frac{\Delta_{\nu} A_{\nu,pq}}{k}\left[\frac{\|\Sigma_{pq}^{1/2}\omega\|^2}{\frac{4\nu_{pq}}{\alpha_{pq}^2} + \|\Sigma_{pq}^{1/2}\omega\|^2}\right]^k\right\},\\
\end{eqnarray*}
where we note that the infinite expansion of the logarithm is valid when
$$
\frac{\|\Sigma_{pq}^{1/2}\omega\|^2}{\frac{4\nu_{pq}}{\alpha_{pq}^2} + \|\Sigma_{pq}^{1/2}\omega\|^2}<1,
$$
and this is satisfied at all times, since $4\nu_{pq}/\alpha_{pq}^2>0$.

Now, consider the matrices $B$ and $C$ with elements $b_{pq}\ge0$, $c_{pq}\ge0$, $p,q=1,\ldots,P$ and suppose that both $-B$ and $-C$ are conditionally nonnegative definite. By applying Lemma 2 (with $\delta=0$, $r=1$), we have that the matrix with elements $1/b_{pq}$ is nonnegative definite, and therefore by the Schur product theorem, we have that the matrix with elements $-c_{pq}/b_{pq}$ is conditionally nonnegative definite. Now, using the matrices in Conditions 2 and 4 in place of the matrices $-C$ and $-B$, respectively, we can state that the matrix with elements 
$$
-\frac{4\nu_{pq}/\alpha_{pq}^2}{\|\Sigma_{pq}^{1/2}\|^2}
$$
is conditionally nonnegative definite. By applying Lemma 2 once more (this time with $\delta=1$), we therefore have that the matrix with elements
$$
\left(\frac{4\nu_{pq}/\alpha_{pq}^2}{\|\Sigma_{pq}^{1/2}\|^2} + 1\right)^{-r}
$$
is nonnegative definite for all $r>0$. It is now clear that, since $A_\nu$ is nonnegative definite and $\Delta_\nu\ge0$ (both by Condition 1), each exponential argument within the product above specifies a nonnegative definite matrix. Repeated further use of the Schur product theorem therefore allows us to conclude that the matrix with elements given by the second term in \eqref{eq:mvgaMaternspecdecomp} is indeed nonnegative definite. 

Finally, Condition \ref{cond:sigma} states the nonnegative definiteness of the matrix with entries specified by the fourth term of \eqref{eq:mvgaMaternspecdecomp}, and so we may conclude the stated result.
\end{proof}

\subsection{Proofs of Remarks \ref{rmk:alpha}-\ref{rmk:Sigma}}
\label{appx:proofrmks}
In Remarks \ref{rmk:alpha}-\ref{rmk:Sigma}, we provide definitions of the correlation length, smoothness parameter and spatial deformation matrix for the geometric anisotropic Mat\'ern cross-covariance function $C_{pq}\left(h\left|\alpha_{pq},\nu_{pq},\sigma_{pq},\Sigma_{pq}\right.\right)$, in terms of the corresponding marginal quantities. In this subsection, we prove that these definitions satisfy Conditions \ref{cond:alpha}-\ref{cond:Sigma}, respectively. 

Recall that a matrix $A\in \mathbb{C}^{P}\times \mathbb{C}^{P}$ is conditionally nonnegative definite if, for all $x\in\mathbb{C}^P$ such that $\sum_{p=1}^Px_p=0$, $\sum_{p,q=1}^P x_p A_{pq}x_q^*\ge0$, where $x_p^*$ is the complex conjugate of $x_p$.

\begin{proof}[Proof of Remark \ref{rmk:alpha}]
This proof is given in the appendix of \citet{Apanasovich12} for a different Mat\'ern parameterisation; we translate it to the current Mat\'ern parameterisation here. Suppose that
\begin{equation}
\label{eq:xcovcorrlen_app}
\frac{4\nu_{pq}}{\alpha_{pq}^2} = \frac12\left(\frac{4\nu_{pp}}{\alpha_{pp}^2}+\frac{4\nu_{qq}}{\alpha_{qq}^2}\right) + \Delta_{\alpha}\left(1-A_{\alpha,pq}\right),
\qquad
p,q = 1,\ldots,P,
\end{equation}
with $\Delta_{\alpha}\ge0$ and $0\le A_{\alpha,pq}\le1$ that form a valid correlation matrix. Consider $x\in\mathbb{C}^{P}$ such that $\sum_{p=1}^Px_p = 0$. Using \eqref{eq:xcovcorrlen_app},
\begin{eqnarray*}
\sum_{p,q} x_p \frac{4\nu_{pq}}{\alpha_{pq}^2} x^*_q & = & 
	\frac12 \left\{
			\left(\sum_{p}x_p\frac{4\nu_{pp}}{\alpha_{pp}^2}\right)\left(\sum_{q}x_q^*\right)
		+	\left(\sum_{p}x_p\right)\left(\sum_{q}\frac{4\nu_{qq}}{\alpha_{qq}^2}x_q^*\right)
		\right\}
	\\
&&\qquad\qquad\qquad\qquad\qquad\qquad
	+	\Delta_{\alpha}\sum_{p}x_p\sum_{q}x_q^*
	- 	\Delta_{\alpha}\sum_{pq}x_p A_{\alpha,pq} x_q^*
	\\
& = & - \Delta_{\alpha}\sum_{pq}x_p A_{\alpha,pq} x_q^*
\le  0,\mbox{ as }A_{\alpha,pq}\mbox{ is nonnegative definite.}
\end{eqnarray*}
Hence, the matrix with $(p,q)$-element $-\frac{4\nu_{pq}}{\alpha_{pq}^2}$ is conditionally nonnegative definite.
\end{proof}

\begin{proof}[Proof of Remark \ref{rmk:sigma}]
Through straightforward manipulation of the expression in Remark \ref{rmk:sigma}, we see that the matrix in Condition \ref{cond:sigma} is equal to the matrix with $(p,q)$-element given by $V_pV_qA_{\sigma,pq}$, where $V_p$, $V_q$ and $A_{\sigma}$ are defined in Remark \ref{rmk:sigma}. Since $A_{\sigma}$ is a (nonnegative definite) correlation matrix, and since $V_p,V_q\ge0$, this is also nonnegative definite.
\end{proof}

\begin{proof}[Proof of Remark \ref{rmk:Sigma}]

We also give motivation for the chosen construction of $\Sigma_{pq}$. We wish to have $\Sigma_{pq}$ such that the $P\times P$ matrix with $(p,q)$-element $-\omega^T \Sigma_{pq} \omega$ is conditionally nonnegative definite. Now, for a $P\times P$ matrix with $(p,q)$-element $-C_{pq}$ to be nonnegative definite, a necessary condition is for 
$$
C_{pq} \ge \frac12\left(C_{pp}+C_{qq}\right), \qquad p,q=1,\ldots,P.
$$
It therefore follows that for Condition \ref{cond:Sigma} to hold, we need
$$
\omega^T\Sigma_{pq}\omega \ge \frac12\left(\omega^T\Sigma_{pp}\omega + \omega^T\Sigma_{qq}\omega\right), \qquad \forall \omega\in\mathbb{R}^2.
$$
Since this must hold for all $\omega\in\mathbb{R}^2$, we can consider the particular case for $\left\{\omega\in\mathbb{R}^2:\omega_2=0\right\}$, from which we can deduce
$$
\left[\Sigma_{pq}\right]_{11} \ge \frac12  \Big(\left[\Sigma_{pp}\right]_{11} +\left[\Sigma_{qq}\right]_{11}\Big),
$$
and similarly, we can deduce
$$
\left[\Sigma_{pq}\right]_{22} \ge \frac12  \Big(\left[\Sigma_{pp}\right]_{22} +\left[\Sigma_{qq}\right]_{22}\Big);
$$
this motivates the construction of the diagonal elements of $\Sigma_{pq}$ in Remark \ref{rmk:Sigma}: 
\begin{eqnarray*}
\left[\Sigma_{pq}\right]_{ii} = \frac12\left[\Sigma_{pp}+\Sigma_{qq}\right]_{ii} + \Delta^{(i)}_{\Sigma}\left(1-A^{(i)}_{\Sigma,pq}\right),\qquad i=1,2,
\end{eqnarray*}
where each $A_{\Sigma}^{(i)}$ is a $P\times P$ correlation matrix and each $\Delta_{\Sigma}^{(i)}$ is a nonnegative constant.
Now, consider $x\in\mathbb{C}^{P}$ such that $\sum_{p=1}^Px_p = 0$. We wish to show that
\begin{eqnarray*}
\sum_{p,q} x_p \left(\omega^T\Sigma_{pq}\omega\right) x_q^* = \omega^T \left(\sum_{p,q} x_p \Sigma_{pq} x_q^*\right) \omega \le 0\qquad \forall \omega\in\mathbb{R}^2.
\end{eqnarray*}

By expanding the above quadratic in $\omega$, and then substituting our chosen construction for the diagonal elements, we can simplify to obtain
\begin{eqnarray*}
 \omega^T \left(\sum_{p,q} x_p \Sigma_{pq} x_q^*\right) \omega & = & 
-\Delta_{\Sigma}^{(1)}B_{\Sigma}^{(1)}\omega_1^2 -\Delta_{\Sigma}^{(2)}B_{\Sigma}^{(2)} \omega_2^2  + 2\omega_1\omega_2  \left(\sum_{p,q} x_p \left[\Sigma_{pq}\right]_{12} x_q^*\right),
\end{eqnarray*}
where, for $i=1,2$, $B^{(i)}_{\Sigma}=\left(\sum_{pq}x_p A_{\Sigma,pq}^{(i)} x_q^*\right)$ is nonnegative, as $A^{(i)}_{\Sigma}$ is a correlation matrix. In order for this quadratic term to maintain the same sign for all $\omega\in\mathbb{R}^2$, we must be able to factorise it further, i.e.~we must be able to write
$$
 \omega^T \left(\sum_{p,q} x_p \Sigma_{pq} x_q^*\right) \omega = k_1\left(\omega_1\pm k_2\omega_2\right)^2.
$$
for some $k_1,k_2\in\mathbb{R}$. By expanding and equating terms, it is straightforward to show that this form can be obtained: we can write
\begin{eqnarray*}
\omega^T \left(\sum_{p,q} x_p \Sigma_{pq} x_q^*\right) \omega & = & -\Delta_{\Sigma}^{(1)}B_{\Sigma}^{(1)}\left(\omega_1\pm \omega_2\sqrt{\frac{\Delta^{(2)}_{\Sigma}B_{\Sigma}^{(2)}}{\Delta^{(1)}_{\Sigma}B_{\Sigma}^{(1)}}}\right)^2 
\end{eqnarray*}
iff the off-diagonal elements of $\Sigma_{pq}$ satisfy the relationship
\begin{eqnarray}
\label{eq:SigDiagOffdiagRelation}
\left(\sum_{p,q=1}^P x_p \left[\Sigma_{pq}\right]_{12}x_q^* \right)^2  =
	\Delta^{(1)}_{\Sigma}B_{\Sigma}^{(1)}
	\Delta^{(2)}_{\Sigma}B_{\Sigma}^{(2)}.
\end{eqnarray}
Note that this specifies a relationship between the diagonal and off-diagonal elements of the set of matrices $\{\Sigma_{pq}, p,q=1,\ldots,P\}$, which must be satisfied in order for the $P\times P$ matrix with $(p,q)$-element $-\omega^T \Sigma_{pq} \omega$ to be conditionally nonnegative definite. 

Note that, since each $\Sigma_{pq}$ is a deformation matrix with form given by \eqref{eq:geoanisotSigmadef}, its diagonal and off-diagonal elements must be consistent with the same choice of $(\theta_{pq},\zeta_{pq})$. This places a fundamental restriction on the form of each $\Sigma_{pq}$, which will, in general, not agree with the constraint in \eqref{eq:SigDiagOffdiagRelation}. We can circumvent this apparent incompatibility of restrictions on the set of deformation matrices $\{\Sigma_{pq}, p,q=1,\ldots,P\}$ by writing the off-diagonal elements in the form
\begin{equation}
\label{eq:SigOffdiagDecomp}
\left[\Sigma_{pq}\right]_{12} = \left[\Sigma_{pq}\right]_{21} = b_p + c_q + A_{\Sigma,pq}^{(3)},
\end{equation}
where $b,c\in\mathbb{R}^P$ are constant $P$-length vectors, $\Delta^{(3)}_{\Sigma}$ is a nonnegative constant, and $A_{\Sigma}^{(3)}\in\mathbb{R}^{P\times P}$ is a $P\times P$ real matrix that satisfies
\begin{equation*}
\left(\sum_{p,q=1}^P x_p A_{\Sigma,pq}^{(3)} x_q^*\right)^2 =  
	\Delta^{(1)}_{\Sigma}B_{\Sigma}^{(1)}
	\Delta^{(2)}_{\Sigma}B_{\Sigma}^{(2)}.
\end{equation*}
By specifying the off-diagonal elements of $\Sigma_{pq}$ in this way, we have that our conditional nonnegative definiteness restriction \eqref{eq:SigDiagOffdiagRelation} reduces to a restriction on $A_{\Sigma}^{(3)}$, which is unaffected by the need for $\Sigma_{pq}$ to maintain the form of a valid deformation matrix, specified by \eqref{eq:geoanisotSigmadef}; since there are no further restrictions on the form of $A_{\Sigma}^{(3)}$, such a matrix will certainly exist. 

Therefore, if the diagonal elements of the deformation matrix $\Sigma_{pq}$ are specified as in Remark \ref{rmk:Sigma}, the resulting off-diagonal elements (which are immediately specified via \eqref{eq:geoanisotSigmadef}) will always satisfy a valid decomposition \eqref{eq:SigOffdiagDecomp}, guaranteeing satisfaction of the relationship \eqref{eq:SigDiagOffdiagRelation} will be satisfied. This allows us to conclude that, if the diagonal elements of the deformation matrix $\Sigma_{pq}$ are specified as in Remark \ref{rmk:Sigma}, the $P\times P$ matrix with $(p,q)$-element $-\omega^T \Sigma_{pq} \omega$ will be conditionally nonnegative definite. 
\end{proof}

\subsection{Estimators of second-order summary statistics}
\label{appx:2ordestimators}
We present details of two estimators of second-order summary statistics that are used in our parameter estimation procedure. The first estimator we consider is for the isotropic cross-pair correlation function $g_{0,pq}(r)$, used in initialising the Mat\'ern power and scale parameters: 
\begin{equation}
\label{eq:isoxpcfest}
\hat{g}_{0,pq}(r) = \sum_{\substack{x_p\in X_p\cap W \\ x_q\in X_q\cap W}}^{\ne}\frac{\kappa_{h_r}(\|x_p-x_q\| - r)}{2 \pi r 
	\hat{\lambda}_p \hat{\lambda}_q
	|W\cap W_{x_p-x_q}|},
\end{equation}
where $\kappa_{h_r}$ is a radial kernel function with bandwidth $h_r$, $\hat{\lambda}_{p}$ is an estimator for the constant expected intensity of component $X_p$, defined in \eqref{eq:lamp_mup}, and $|W\cap W_{u}|$ is an edge correction factor, defined as the area of overlap between the observation window $W$ and its translation by $u\in\mathbb{R}^2$; without such a correction, due to the finite observation region, the estimator would underestimate the number of point pairs that lie within distance $r$ of each other. The use of this edge correction also renders $\hat{g}_{0,pq}(r) \ne \hat{g}_{0,qp}(r)$ in general. In \eqref{eq:isoxpcfest}, and in the remainder of the paper, the notation $\Sigma^{\neq}$ indicates summation over all point pairs formed of distinct points; for bivariate definitions such as \eqref{eq:isoxpcfest}, this is clearly only relevant for the case where $p=q$. For component $p$ of our multivariate LGCP, we choose to estimate the expected intensity parameter $\hat{\lambda}_p$ using the classical global intensity estimator, $\hat{\lambda}_p=n_p/|W|$. The choice of kernel function $\kappa_{h_r}$ is discussed by \citet{Illian08} and common choices include the Epanechnikov kernel and the box kernel; we make use of the latter as it can be shown to minimise the variance of \eqref{eq:isoxpcfest}:
\begin{equation*}
\kappa_{h_r}(s) = \left\{
\begin{array}{lcc}
1/2h_r  &\quad&  -h_r\le s \le h_r  \\
0 	  &\quad&  \textrm{otherwise.}
\end{array}
\right.
\end{equation*}

The second estimator that we detail here corresponds to the anisotropic sector-$K$-function $K^a_{pq}(r,\phi)$:
\begin{equation}
\label{eq:sectorKfunest}
\hat{K}^a_{pq}(r,\phi) = \hat{K}^a_{pq}(r,\phi+\pi) = \sum_{\substack{x_p\in X_p\cap W\\ x_q\in X_q\cap W}}^{\ne}\frac{H(x_p-x_q,(r,\phi))}{\hat{\lambda}_p \hat{\lambda}_q
	|W\cap W_{x_p-x_q}|},
\end{equation}
where
\begin{equation*}
H(x_1-x_2,(r,\phi)) = \mathbb{I}(\|x_1-x_2\| \le r)\kappa_{h_{\phi}}(\psi(x_1,x_2) - \phi),
\end{equation*}
with $\mathbb{I}(\cdot)$ the indicator function, $\kappa_{h_{\phi}}$ an angular kernel function with bandwidth $h_{\phi}$, and $\psi(x_1,x_2)$ the angle between the directed line from $x_1$ to $x_2$ and the abscissa-axis. In our implementation, we will use a box kernel for $\kappa_{h_{\phi}}$, defined analogously to the radial kernel function $\kappa_{h_r}$ above.

\subsection{Additional Figures}
\label{appx:addfigs}
In Section \ref{sec:implementation} of the article, we provide proof-of-concept results for our model-fitting procedure. There, we have given numerical summaries of the estimated parameter distributions for four distinct model specifications, along with an illustration, in Figure \ref{fig:MCsims_Model1}, corresponding to one of these models.

Here, we provide illustration of the estimated parameter distributions
for the three remaining model specifications in our proof-of-concept tests. Figures \ref{fig:MCsims_Model2}, \ref{fig:MCsims_Model3} and \ref{fig:MCsims_Model4} correspond to Models 2, 3 and 4, respectively, and the true parameter values used to generate each dataset can be found in Table \ref{table:MCsims_fullMatern}.
 
\begin{figure}
	\includegraphics[width=\textwidth]{./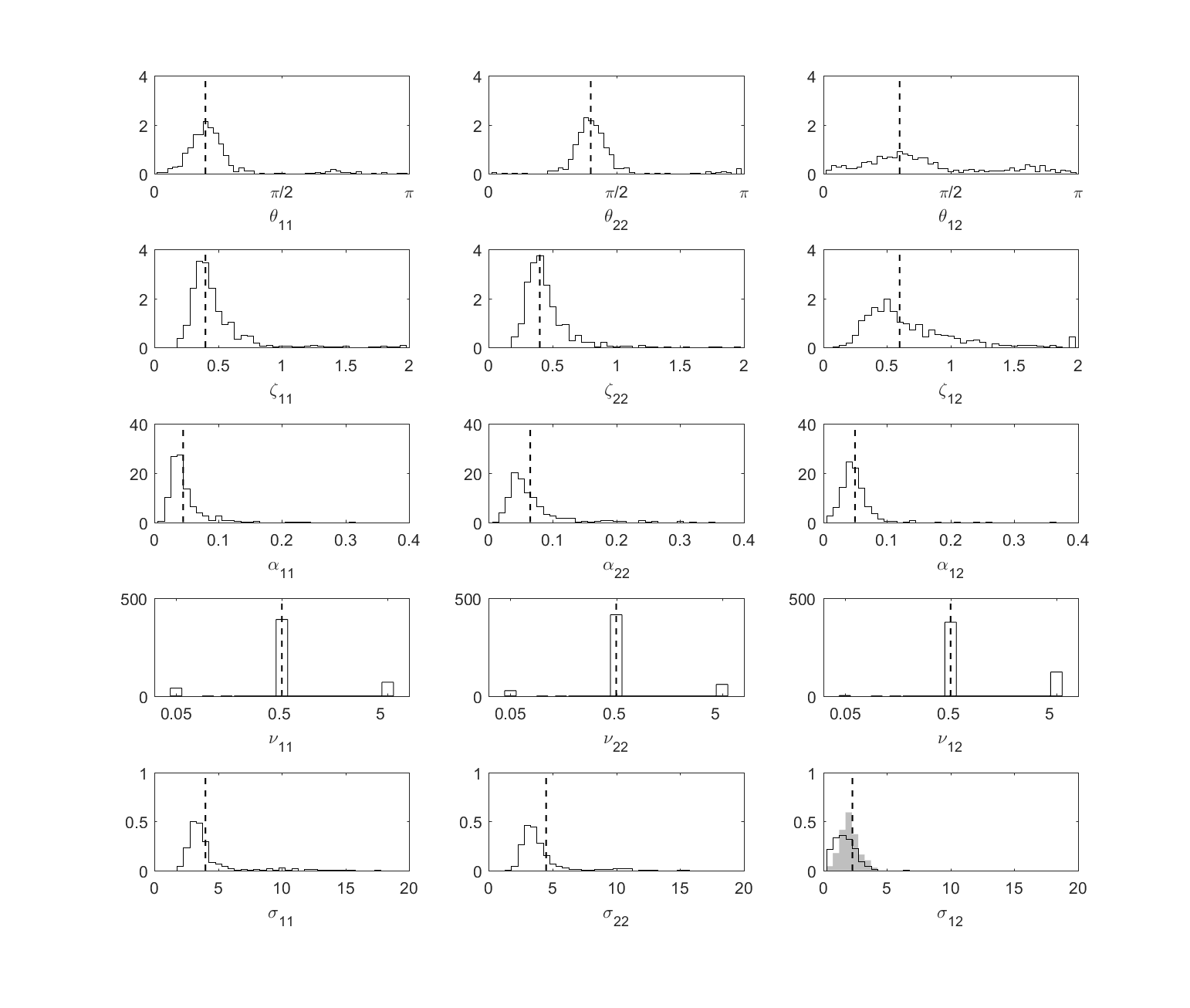}
	\vspace{-40pt}
	\caption{Histograms of the parameter distributions for the synthetic bivariate geometric anisotropic LGCP with Mat\'ern covariance structure specified by Model 2. The parameter values used to generate each dataset are marked by vertical dashed lines.}
	\label{fig:MCsims_Model2}
\end{figure}
\begin{figure}
	\includegraphics[width=\textwidth]{./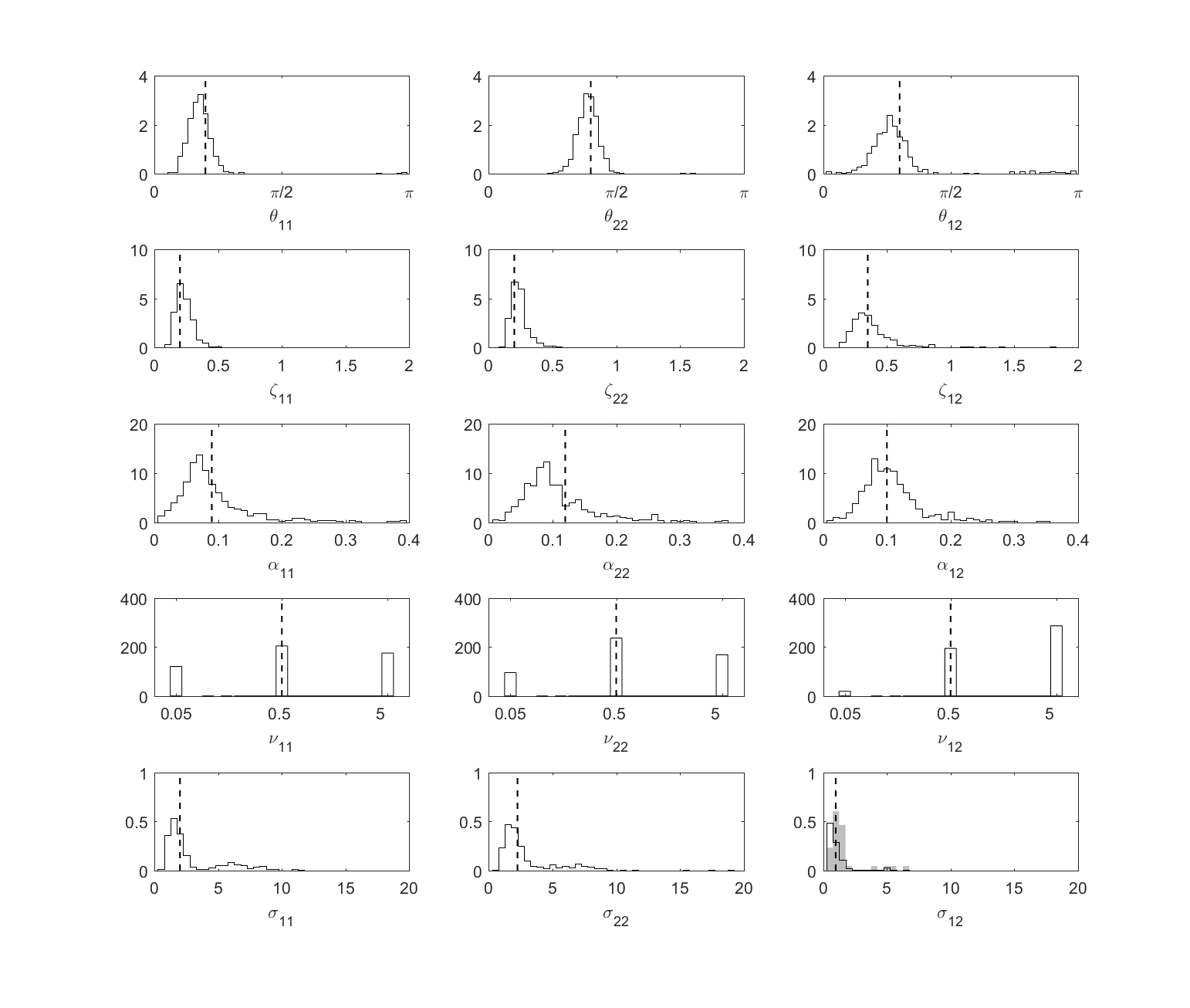}
	\vspace{-40pt}
	\caption{Histograms of the parameter distributions for the synthetic bivariate geometric anisotropic LGCP with Mat\'ern covariance structure specified by Model 3. The parameter values used to generate each dataset are marked by vertical dashed lines.}
	\label{fig:MCsims_Model3}
\end{figure}
\begin{figure}
	\includegraphics[width=\textwidth]{./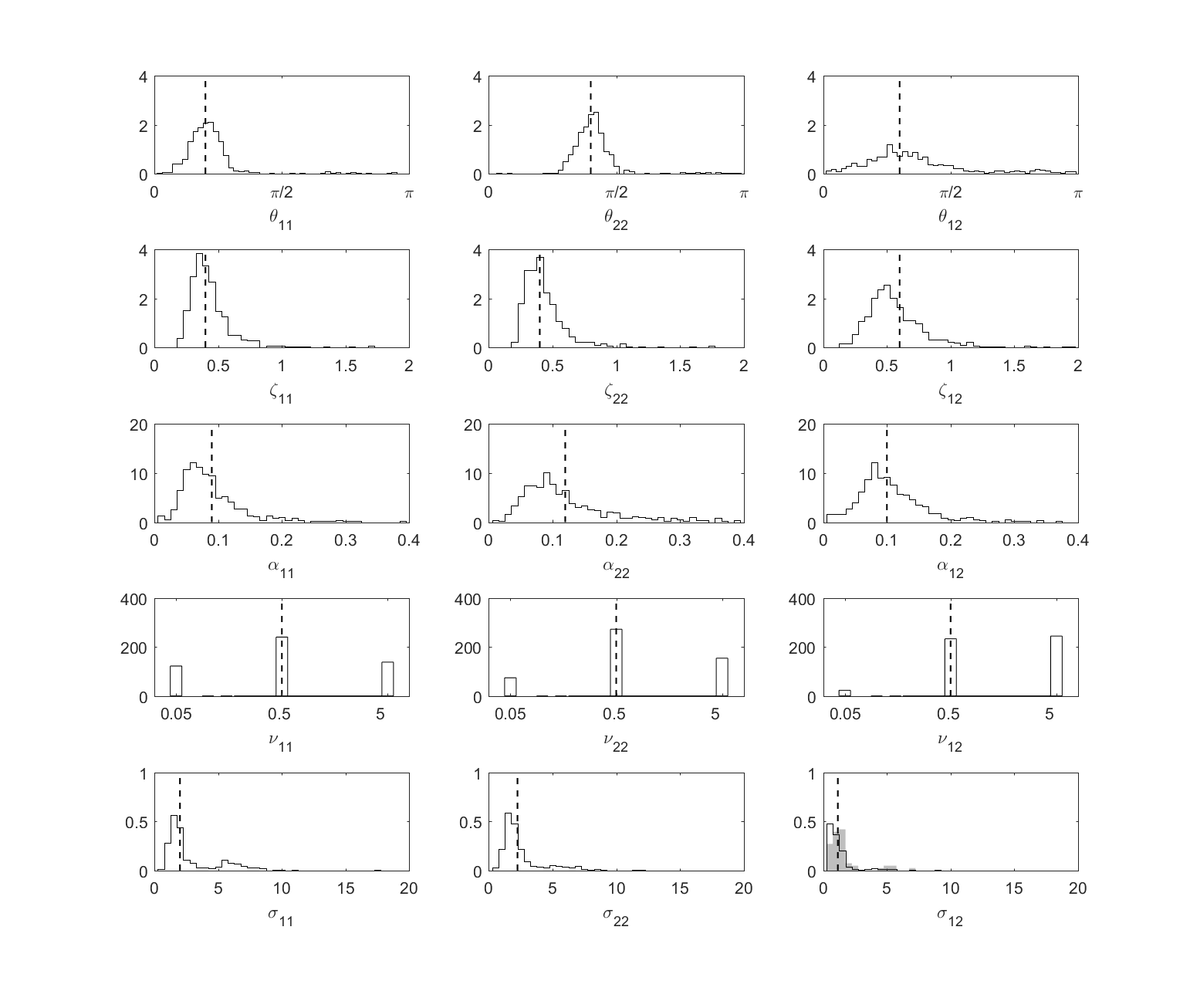}
	\vspace{-40pt}
	\caption{Histograms of the parameter distributions for the synthetic bivariate geometric anisotropic LGCP with Mat\'ern covariance structure specified by Model 4. The parameter values used to generate each dataset are marked by vertical dashed lines.}
	\label{fig:MCsims_Model4}
\end{figure}

\bibliographystyle{biometrika}
\bibliography{C://Users/jmartin/Documents/WrittenWork/FullBibliography}

\begin{thebibliography}{41}
\expandafter\ifx\csname natexlab\endcsname\relax\def\natexlab#1{#1}\fi

\bibitem[{Abramowitz \& Stegun(1965)}]{Abramowitz65}
\textsc{Abramowitz, M.} \& \textsc{Stegun, I.} (1965).
\newblock \textit{Handbook of Mathematical Functions with Fomulas, Graphs and
  Mathematical Tables}.
\newblock Dover Publications Inc., New York.

\bibitem[{Apanasovich \& Genton(2010)}]{Apanasovich10}
\textsc{Apanasovich, T.~V.} \& \textsc{Genton, M.~G.} (2010).
\newblock {Cross-covariance functions for multivariate random fields based on
  latent dimensions}.
\newblock \textit{Biometrika} \textbf{97}, 15--30.

\bibitem[{Apanasovich et~al.(2012)Apanasovich, Genton \& Sun}]{Apanasovich12}
\textsc{Apanasovich, T.~V.}, \textsc{Genton, M.~G.} \& \textsc{Sun, Y.} (2012).
\newblock {A Valid Mat\'ern Class of Cross-Covariance Functions for
  Multivariate Random Fields With Any Number of Components}.
\newblock \textit{Journal of the American Statistical Association}
  \textbf{107}, 180--193.

\bibitem[{Baddeley et~al.(2014)Baddeley, Diggle, Hardegen, Lawrence, Milne \&
  Nair}]{Baddeley14}
\textsc{Baddeley, A.}, \textsc{Diggle, P.~J.}, \textsc{Hardegen, A.},
  \textsc{Lawrence, T.}, \textsc{Milne, R.~K.} \& \textsc{Nair, G.} (2014).
\newblock {On tests of spatial pattern based on simulation envelopes}.
\newblock \textit{Ecological Monographs} \textbf{84}, 477--489.

\bibitem[{Baddeley et~al.(2015)Baddeley, Rubak \& Turner}]{Baddeley15}
\textsc{Baddeley, A.}, \textsc{Rubak, E.} \& \textsc{Turner, R.} (2015).
\newblock \textit{{Spatial Point Patterns: Methodology and Applications with
  R}}.
\newblock Chapman \& Hall/CRC Press.

\bibitem[{Christakos(1992)}]{Christakos92}
\textsc{Christakos, G.} (1992).
\newblock \textit{{Random Field Models in Earth Sciences}}.
\newblock Dover Publications Inc., New York.

\bibitem[{Condit(1998)}]{Condit98}
\textsc{Condit, R.} (1998).
\newblock \textit{{Tropical forest census plots: methods and results from Barro
  Colorado Island, Panama and a comparison with other plots}}.
\newblock Springer Science \& Business Media.

\bibitem[{Cram{\'e}r(1945)}]{Cramer45}
\textsc{Cram{\'e}r, H.} (1945).
\newblock \textit{Mathematical Methods of Statistics}.
\newblock Uppsala: Almqvist \& Wiksells.

\bibitem[{Daley \& Vere-Jones(2008)}]{Daley08}
\textsc{Daley, D.~J.} \& \textsc{Vere-Jones, D.} (2008).
\newblock \textit{{An Introduction to the Theory of Point Processes. Volume II:
  General Theory and Structure}}.
\newblock Springer, New York.

\bibitem[{Diggle \& Milne(1983)}]{Diggle83a}
\textsc{Diggle, P.~J.} \& \textsc{Milne, R.~K.} (1983).
\newblock {Bivariate Cox Processes: Some Models for Bivariate Spatial Point
  Patterns}.
\newblock \textit{Journal of the Royal Statistical Society: Series B
  (Methodological)} \textbf{45}, 11--21.

\bibitem[{Diggle et~al.(2013)Diggle, Moraga, Rowlingson \& Taylor}]{Diggle13b}
\textsc{Diggle, P.~J.}, \textsc{Moraga, P.}, \textsc{Rowlingson, B.} \&
  \textsc{Taylor, B.~M.} (2013).
\newblock {Spatial and Spatio-Temporal Log-Gaussian Cox Processes: Extending
  the Geostatistical Paradigm}.
\newblock \textit{Statistical Science} \textbf{28}, 542--563.

\bibitem[{Dvo{\v{r}}{\`{a}}k \& Proke{\v{s}}ov{\`{a}}(2012)}]{Dvorak12}
\textsc{Dvo{\v{r}}{\`{a}}k, J.} \& \textsc{Proke{\v{s}}ov{\`{a}}, M.} (2012).
\newblock {Moment estimation methods for stationary spatial Cox processes - a
  comparison}.
\newblock \textit{Kybernetika} \textbf{48}, 1007--1026.

\bibitem[{Fry(1979)}]{Fry79}
\textsc{Fry, N.} (1979).
\newblock {Random point distributions and strain measurements in rocks}.
\newblock \textit{Tectonophysics} \textbf{60}, 89--105.

\bibitem[{Genton \& Kleiber(2015)}]{Genton15}
\textsc{Genton, M.~G.} \& \textsc{Kleiber, W.} (2015).
\newblock {Cross-Covariance Functions for Multivariate Geostatistics}.
\newblock \textit{Statistical Science} \textbf{30}, 147--163.

\bibitem[{Gneiting et~al.(2010)Gneiting, Kleiber \& Schlather}]{Gneiting10}
\textsc{Gneiting, T.}, \textsc{Kleiber, W.} \& \textsc{Schlather, M.} (2010).
\newblock {Mat\'ern Cross-Covariance Functions for Multivariate Random Fields}.
\newblock \textit{Journal of the American Statistical Association}
  \textbf{105}, 1167--1177.

\bibitem[{Goff \& Jordan(1988)}]{Goff88}
\textsc{Goff, J.~A.} \& \textsc{Jordan, T.~H.} (1988).
\newblock {Stochastic Modeling of Seafloor Morphology: Inversion of Sea Beam
  Data for Second-Order Statistics}.
\newblock \textit{Journal of Geophysical Research} \textbf{93}, 13589--13608.

\bibitem[{Guttorp \& Gneiting(2006)}]{Guttorp06}
\textsc{Guttorp, P.} \& \textsc{Gneiting, T.} (2006).
\newblock {Studies in the history of probability and statistics XLIX: On the
  Mat\'ern correlation family}.
\newblock \textit{Biometrika} \textbf{93}, 989--995.

\bibitem[{Haase(2001)}]{Haase01}
\textsc{Haase, P.} (2001).
\newblock {Can isotropy vs.~anisotropy in the spatial association of plant
  species reveal physical vs.~biotic facilitation?}
\newblock \textit{Journal of Vegetation Science} \textbf{12}, 127--136.

\bibitem[{Handcock \& Wallis(1994)}]{Handcock94}
\textsc{Handcock, M.~S.} \& \textsc{Wallis, J.~R.} (1994).
\newblock {An Approach to Statistical Spatial-Temporal Modeling of
  Meteorological Fields}.
\newblock \textit{Journal of the American Statistical Association} \textbf{89},
  368--378.

\bibitem[{Hubbell et~al.(2010)Hubbell, Condit \& Foster}]{Hubbell10}
\textsc{Hubbell, S.}, \textsc{Condit, R.} \& \textsc{Foster, R.} (2010).
\newblock {Barro Colorado Forest Census Plot Data}.

\bibitem[{Hubbell et~al.(1999)Hubbell, Foster, O'Brien, Harms, Condit,
  Wechsler, Wright \& De~Lao}]{Hubbell99}
\textsc{Hubbell, S.~P.}, \textsc{Foster, R.~B.}, \textsc{O'Brien, S.~T.},
  \textsc{Harms, K.}, \textsc{Condit, R.}, \textsc{Wechsler, B.},
  \textsc{Wright, S.~J.} \& \textsc{De~Lao, S.~L.} (1999).
\newblock {Light-gap disturbances, recruitment limitation, and tree diversity
  in a neotropical forest}.
\newblock \textit{Science} \textbf{283}, 554--557.

\bibitem[{Illian et~al.(2008)Illian, Penttinen, Stoyan \& Stoyan}]{Illian08}
\textsc{Illian, J.}, \textsc{Penttinen, A.}, \textsc{Stoyan, H.} \&
  \textsc{Stoyan, D.} (2008).
\newblock \textit{{Statistical Analysis and Modelling of Spatial Point
  Patterns}}, vol.~70.
\newblock John Wiley \& Sons, Chichester, UK.

\bibitem[{Kleiber \& Nychka(2012)}]{Kleiber12}
\textsc{Kleiber, W.} \& \textsc{Nychka, D.} (2012).
\newblock {Nonstationary modeling for multivariate spatial processes}.
\newblock \textit{Journal of Multivariate Analysis} \textbf{112}, 76--91.

\bibitem[{Loosemore \& Ford(2006)}]{Loosemore06}
\textsc{Loosemore, N.} \& \textsc{Ford, E.} (2006).
\newblock {Statistical Inference Using the $G$ or $K$ Point Pattern Spatial
  Statistics}.
\newblock \textit{Ecology} \textbf{87}, 1925--1931.

\bibitem[{M{\o}ller et~al.(1998)M{\o}ller, Syversveen \&
  Waagepetersen}]{Moller98}
\textsc{M{\o}ller, J.}, \textsc{Syversveen, A.~R.} \& \textsc{Waagepetersen,
  R.~P.} (1998).
\newblock {Log Gaussian Cox Processes}.
\newblock \textit{Scandinavian Journal of Statistics} \textbf{25}, 451--482.

\bibitem[{M{\o}ller \& Toftaker(2014)}]{Moller14}
\textsc{M{\o}ller, J.} \& \textsc{Toftaker, H.} (2014).
\newblock {Geometric Anisotropic Spatial Point Pattern Analysis and Cox
  Processes}.
\newblock \textit{Scandinavian Journal of Statistics} \textbf{41}, 414--435.

\bibitem[{Mugglestone \& Renshaw(1996)}]{Mugglestone96a}
\textsc{Mugglestone, M.} \& \textsc{Renshaw, E.} (1996).
\newblock {A practical guide to the spectral analysis of spatial point
  processes}.
\newblock \textit{Computational Statistics and Data Analysis} \textbf{21},
  43--65.

\bibitem[{Myllym{\" a}ki et~al.(2017)Myllym{\" a}ki, Mrkvi{\v c}ka, Grabarnik,
  Seijo \& Hahn}]{Myllymaki17}
\textsc{Myllym{\" a}ki, M.}, \textsc{Mrkvi{\v c}ka, T.}, \textsc{Grabarnik,
  P.}, \textsc{Seijo, H.} \& \textsc{Hahn, U.} (2017).
\newblock Global envelope tests for spatial processes.
\newblock \textit{Journal of the Royal Statistical Society: Series B
  (Methodological)} \textbf{79}, 381--404.

\bibitem[{Proke{\v{s}}ov{\`{a}} \& Jensen(2013)}]{Prokesova13}
\textsc{Proke{\v{s}}ov{\`{a}}, M.} \& \textsc{Jensen, E.} (2013).
\newblock {Asymptotic Palm likelihood theory for stationary point processes}.
\newblock \textit{Annals of the Institute of Statistical Mathematics}
  \textbf{65}, 387--412.

\bibitem[{Rajala et~al.(2018{\natexlab{a}})Rajala, Murrell \&
  Olhede}]{Rajala18b}
\textsc{Rajala, T.~A.}, \textsc{Murrell, D.~J.} \& \textsc{Olhede, S.~C.}
  (2018{\natexlab{a}}).
\newblock Detecting multivariate interactions in spatial point patterns with
  gibbs models and variable selection.
\newblock \textit{Journal of the Royal Statistical Society: Series C (Applied
  Statistics)} .

\bibitem[{Rajala et~al.(2018{\natexlab{b}})Rajala, Redenbach, S\"arkk\"a \&
  Sormani}]{Rajala18a}
\textsc{Rajala, T.~A.}, \textsc{Redenbach, C.}, \textsc{S\"arkk\"a, A.} \&
  \textsc{Sormani, M.} (2018{\natexlab{b}}).
\newblock A review on anisotropy analysis of spatial point patterns.
\newblock \textit{Spatial Statistics} .

\bibitem[{Rajala et~al.(2016)Rajala, S\"arkk\"a, Redenbach \&
  Sormani}]{Rajala16}
\textsc{Rajala, T.~A.}, \textsc{S\"arkk\"a, A.}, \textsc{Redenbach, C.} \&
  \textsc{Sormani, M.} (2016).
\newblock {Estimating geometric anisotropy in spatial point patterns}.
\newblock \textit{Spatial Statistics} \textbf{15}, 100--114.

\bibitem[{Redenbach et~al.(2009)Redenbach, S\"arkk\"a, Freitag \&
  Schladitz}]{Redenback09}
\textsc{Redenbach, C.}, \textsc{S\"arkk\"a, A.}, \textsc{Freitag, J.} \&
  \textsc{Schladitz, K.} (2009).
\newblock Anisotropy analysis of pressed point processes.
\newblock \textit{Advances in Statistical Analysis} \textbf{93}, 237--261.

\bibitem[{Stein(1999)}]{Stein99}
\textsc{Stein, M.~L.} (1999).
\newblock \textit{{Interpolation of Spatial Data: Some Theory for Kriging}}.
\newblock Springer, New York.

\bibitem[{Tanaka et~al.(2008)Tanaka, Ogata \& Stoyan}]{Tanaka08}
\textsc{Tanaka, U.}, \textsc{Ogata, Y.} \& \textsc{Stoyan, D.} (2008).
\newblock {Parameter Estimation and Model Selection for Neyman-Scott Point
  Processes}.
\newblock \textit{Biometrical Journal} \textbf{50}, 43--57.

\bibitem[{Van~Lieshout \& Baddeley(1999)}]{VanLieshout99}
\textsc{Van~Lieshout, M.} \& \textsc{Baddeley, A.~J.} (1999).
\newblock {Indices of Dependence between Types in Multivariate Point Patterns}.
\newblock \textit{Scandinavian Journal of Statistics} \textbf{26}, 511--532.

\bibitem[{Veen \& Schoenberg(2006)}]{Veen06}
\textsc{Veen, A.} \& \textsc{Schoenberg, F.} (2006).
\newblock {Assessing spatial point process models using weighted K-functions:
  analysis of California earthquakes}.
\newblock In \textit{Case Studies in Spatial Point Process Modeling}. Springer,
  pp. 293--306.

\bibitem[{Waagepetersen(2008)}]{Waagepetersen08}
\textsc{Waagepetersen, R.} (2008).
\newblock {Estimating Functions for Inhomogeneous Spatial Point Processes with
  Incomplete Covariate Data}.
\newblock \textit{Biometrika} , 351--363.

\bibitem[{Waagepetersen \& Guan(2009)}]{Waagepetersen09}
\textsc{Waagepetersen, R.} \& \textsc{Guan, Y.} (2009).
\newblock {Two-step estimation for inhomogeneous spatial point processes}.
\newblock \textit{Journal of the Royal Statistical Society: Series B
  (Methodological)} \textbf{71}, 685--702.

\bibitem[{Waagepetersen et~al.(2016)Waagepetersen, Guan, Jalilian \&
  Mateu}]{Waagepetersen16}
\textsc{Waagepetersen, R.}, \textsc{Guan, Y.}, \textsc{Jalilian, A.} \&
  \textsc{Mateu, J.} (2016).
\newblock {Analysis of multispecies point patterns by using multivariate
  log-Gaussian Cox processes}.
\newblock \textit{Journal of the Royal Statistical Society: Series C (Applied
  Statistics)} \textbf{65}, 77--96.

\bibitem[{Zhang(2004)}]{Zhang04}
\textsc{Zhang, H.} (2004).
\newblock {Inconsistent Estimation and Asymptotically Equal Interpolations in
  Model-Based Geostatistics}.
\newblock \textit{Journal of the American Statistical Association} \textbf{99},
  250--261.

\end{thebibliography}

\end{document}